\newsavebox{\abstractbox}
\renewenvironment{abstract}
 {%
  \global\setbox\abstractbox=\vtop\bgroup
  \begin{center}\bfseries\abstractname\end{center}%
 }
 {\par\egroup}
\newtheorem{theorem}{Theorem}
\numberwithin{theorem}{section}
\newtheorem{lemma}[theorem]{Lemma}
\newtheorem{definition}[theorem]{Definition}
\newtheorem{corollary}[theorem]{Corollary}
\newtheorem{proposition}[theorem]{Proposition}
\title{Solving linear programs on factorized databases}
\author[1]{Florent Capelli}
\affil[1]{{Universit\'{e} de Lille} 
  }
\author[2]{Nicolas Crosetti}
\affil[2]{{Inria, Lille} 
}
\author[2]{Joachim Niehren}
\author[2]{Jan Ramon}
\newcommand{\void}[1] {}
\newcommand{\NP}{\mathsf{NP}}
\newcommand{\coNP}{\mathsf{coNP}}
\newcommand\ignore[1]{}
\newcommand{\calC}{\mathcal{C}}
\newcommand{\Ext}{\mathit{Ext}}
\newcommand{\andsym}{\times}
\newcommand{\orsym}{\uplus}
\newcommand{\andgate}{$\andsym$-gate}
\newcommand{\orgate}{$\orsym$-gate}
\newcommand{\cupgate}{$\cup$-gate}
\newcommand{\W} {W}
\newcommand{\om}[1][] {\omega_{#1}}
\newcommand{\val}[2] {#1(#2)}
\newcommand{\eweightings}[1][\circuit] {\assigntype{\edges{#1}}{\R_+}}
\newcommand{\tweightings}[1][] {\assigntype{\solcircuit{#1}}{\R_+}}
\newcommand{\solsym}{\mathit{rel}}
\newcommand{\solcircuit}[2][\circuit] {\solsym(#1_{#2})}
\newcommand{\soledge}[2][\circuit] {\solsym(#1, #2)}
\newcommand{\solproj}[3][\circuit]{\solsym(#1, #2, #3)}
\newcommand{\proj}[2][\tau] {#1_{\mid \var{#2}}}
\newcommand{\projt}[2] {#1_{\mid {#2}}}
\newcommand{\assigntype}[2] {#2^{#1}}
\newcommand{\R}{\mathbb{R}}
\newcommand{\rootedge} {o_r}
\newcommand{\proofnote}[1]{\text{\color{gray} \quad (#1)}}
\newcommand{\edge}[2] {\langle #1, #2\rangle} 
\newcommand{\prooftree}[2] {\mathsf{T}_{#1}(#2)}
\newcommand{\somept} {\mathsf{T}}
\newcommand{\ddCircuit} {$\{\uplus,\times\}$-Circuit\xspace} 
\newcommand{\ddcircuit} {$\{\uplus,\times\}$-circuit\xspace} 
\newcommand{\udcircuit} {$\{\cup,\times\}$-circuit\xspace} 
\newcommand{\circuit} {C}
\newcommand{\edges}[1] {\mathit{Edges(#1)}}
\newcommand{\nodes}[1] {\mathit{Nodes(#1)}}
\newcommand{\xdset}[3]{#1_{#2/#3}}
\newcommand{\tupleset}[2]{\xdset{S}{#1}{#2}}
\newcommand{\var}[1] {\mathsf{Attr}(#1)}
\newcommand{\dom} {\mathsf{D}}
\newcommand{\db} {\mathbb{D}}
\newcommand{\varx} {\mathsf{X}}
\newcommand{\disjointsolunion}[3] {\biguplus\limits_{#1 \in #2} \soledge[#3]{#1}}
\newcommand{\ingoing}[1] {\mathsf{In}(#1)}
\newcommand{\outgoing}[1] {\mathsf{Out}(#1)}
\newcommand{\sumlp}{CAS-LP}
\newcommand{\sxdname}{CAS variable}
\newcommand{\attrset}{\mathsf{X}}
\newcommand{\sxd}[2]{S_{#1, #2}}
\newcommand{\xdtuplesum}[3]{\sum\limits_{\substack{\tau \in #3 \\ \tau(#1)=#2}} \val{\om}{\tau}}
\newcommand{\sol}[1]{sol(#1)}
\begin{document}

\begin{abstract}

A typical workflow for solving a linear programming problem is to first write a
linear program parametrized by the data in a language such as Math GNU Prog or
AMPL then call the solver on this program while providing the data. When the data
is extracted using a query on a database, this approach ignores the underlying
structure of the answer set which may result in a blow-up of the size of the
linear program if the answer set is big. In this paper, we study the problem of
solving linear programming problems whose variables are the answers to a
conjunctive query. We show that one can exploit the structure of the query to
rewrite the linear program so that its size depends only on the size of the
database and not on the size of the answer set. More precisely, we give a
generic way of rewriting a linear program whose variables are the tuples in
$Q(D)$ for a conjunctive query $Q$ and a database $D$ into a linear program
having a number of variables that only depends on the size of a factorized
representation of $Q(D)$, which can be much smaller when the fractional
hypertree width of $Q$ is bounded.

\end{abstract}

\maketitle

\section{Introduction}
\label{sec:intoduction}

Computing the entire answer set of a database query is often only a first step
toward a more complex processing of the data usually involving different
aggregation tasks such as counting or computing average values. There has been a
successful line of research to design algorithms that could aggregate the
answers without explicitly solving the query first. For example, building on the
celebrated result of Yannakakis~\cite{Yannakakis} and its
extensions~\cite{gottlob_hypertree_2002}, Pichler and
Skritek~\cite{pichler_tractable_2013} have shown that one can count the number
of solutions of a quantifier-free acyclic (or of bounded hypertree width)
conjunctive query in polynomial time in the size of the database, which can be
significantly smaller than the size of the answer sets. In this paper, we are
interested in an other kind of aggregation problems: solving optimization
problems on the answers of a query. More precisely, we are interested in solving
linear programs whose variables are the answers of a database query. This kind
of problems is natural and it corresponds to the way most languages for
optimization problems such as GNU MathProg and AMPL~\cite{ampl} are designed.
They allow to write an optimization problem by separating the definition of its
constraints on parametrized sets -- that can alternatively be seen as database
tables -- and a part describing the actual content of these sets --
corresponding to \emph{the data} stored in the table (see the example files of
The AMPL Book~\cite{amplbook} for program examples). While these tools can
easily be interfaced with a DBMS to access data, they do not receive any
information concerning how the data was extracted from the database. In
particular, when the data is the result of a query, it may be structured in a
way that could be exploited by the solver to generate a smaller optimization
problem. Before going further into details, we illustrate this phenomenon with a
simple example showing how one can rewrite a linear program whose variables are
the solutions to a conjunctive query into a smaller equivalent one.

\subsection{Motivating example}
\label{sec:example}


\newcommand{\projNeeds}{projects}

\newcommand{\sciContributors}{researchers}
\newcommand{\expContributors}{developers}


\newcommand{\project}{pname}

\newcommand{\sciContributor}{rname}
\newcommand{\expContributor}{dname}

\newcommand{\sciExpertise}{field}
\newcommand{\expExpertise}{language}


\newcommand{\pjI}{p1}
\newcommand{\pjII}{p2}

\newcommand{\scI}{Alice}
\newcommand{\scII}{Bob}
\newcommand{\scIII}{Carol}
\newcommand{\scIV}{David}

\newcommand{\seI}{ML}
\newcommand{\seII}{DBs}

\newcommand{\ecI}{Eve}
\newcommand{\ecII}{Frida}
\newcommand{\ecIII}{Guy}

\newcommand{\eeI}{Python}


\newcommand{\projectTable}{
\begin{tabular}{l|l|l|l}
\projNeeds & \project & \sciExpertise & \expExpertise \\ \hline
    & \pjI  & \seI  & \eeI \\
    & \pjII & \seII & \eeI \\
\end{tabular}
}


\newcommand{\qname}{Q}
\newcommand{\domain}[1]{D_{#1}}

\newcommand{\exW}[2]{\omega^{#1}_{#2}}


\newcommand{\sciTable}{
\begin{tabular}{l|l|l}
\sciContributors & \sciContributor & \sciExpertise \\ \hline
    & \scI      & \seI  \\
    & \scII     & \seI  \\
    & \scIII    & \seI  \\
    & \scIV     & \seII \\
\end{tabular}
}


\newcommand{\expTable}{
\begin{tabular}{l|l|l}

\expContributors & \expContributor & \expExpertise \\ \hline
    & \ecI      & \eeI \\
    & \ecII     & \eeI \\
    & \ecIII    & \eeI \\
\end{tabular}
}


To illustrate how one can exploit the structure of the query to improve the
performances of the linear solver, we consider the following example. A
university is funding several applied research projects where it is needed for
researchers and developpers team up. Each project requires knowledge of a
specific field and expertise in a programming language from the assigned
researchers and developpers respectively. For simplicity, we assume here that
each project needs exactly one field and one programming language. Moreover the
researchers and developpers should work as pairs so for a given project the
assigned researchers should work the same amount of time as the developpers. The
goal of the university is to optimize the total time invested into the projects
while ensuring that no contributor is overworked. The data about the researchers
and projects is represented as follows:

\begin{itemize}

    \item $\projNeeds(\underline{\project}, \sciExpertise, \expExpertise)$ lists the projects, the field they are
        related to and the required programming language,
    \item $\sciContributors(\underline{\sciContributor}, \sciExpertise)$ lists the researchers and the fields
        they are knowledgeable in,
    \item $\expContributors(\underline{\expContributor}, \expExpertise)$ lists the developpers and the
        programming languages they are proficient in.
\end{itemize}

All the possible assignments of contributors to projects can be expressed with a simple conjunctive
query: 
\[ 
    \qname(p, r, d, f, l) = \projNeeds(p, f, l) \wedge \sciContributors(r, f) \wedge
    \expContributors(d, l).
\]
  
Given a project $p$, language $l$, field $f$, researcher $r$ and developper $d$,
we define $\om(p, r, d, f, l)$ to be the percentage of time allotted for $d$ and
$r$ to work on $p$, e.g. $\om(\pjI, \scI, \ecI, Databases, Python) = 42\%$ means
that $\scI$ and $\ecI$ will work on $\pjI$ as a pair for $42\%$ of their time
because of their respective skills in Databases and Python. The problem of
maximizing the overall time spent on the projects without overloaded the
contributors can be modelled as shown in Fig.~\ref{fig:lpex}.

\begin{figure}

\begin{align*}
    \text{maximize } & \sum\limits_{(p, r, d, f,l) \in \qname} \om(p, r, d, f, l) \\
    \text{subject to }  
        & \forall r \in \domain{\sciContributor}, 
            \sum\limits_{p,d,f,l} \om(p, r, d, f, l) \leq 100. \\
        & \forall d \in \domain{\expContributor}, 
            \sum\limits_{p,r,f,l} \om(p, r, d, f, l) \leq 100. \\
\end{align*}

\caption{Linear program example}
\label{fig:lpex}
\end{figure}

Observe that this linear program model is independent of the data stored in the
database which corresponds to the typical way of modelling linear programs using
languages like AMPL or GNU MathProg. However, given a database $\db$, this model
can be unfolded into a linear program whose size is proportional to the size of
$\qname(\db)$. In this case, as $\qname$ is the natural join of three tables,
$\qname(\db)$ may be of size $O(N^3)$ if we consider input relations of size
bounded by $N$. Thus, it will results in a linear program having $O(N^3)$
variables and $|\domain{\sciContributor}|+|\domain{\expContributor}|$
constraints.

It turns out that the structure of the query, which is acyclic, allows us to
rewrite the linear program into an equivalent one which has the same optimal
value but has only $O(N)$ variables and a comparable number of constraints. The
goal of this paper is to provide a generic way of doing this transformation when
the query is sufficiently well-structured. 

Observe that one could construct a similar example with any $k$ different types
of skills, all being stored in $k$ different tables. The join query would now be
between $k$ tables and it would give a naive linear program with $O(N^k)$
variables. Our technique would still construct an equivalent linear program
having $O(N)$ variables, leading to more interesting improvements. 

We now illustrate how one can rewrite the original linear program into a smaller
one. We introduce the following variables instead of the assignment weights. We
denote by $\exW{P}{pfl}$ the time invested into the project $p$ by pairs of
researchers and programmers in field $f$ and language $l$. We denote by
$\exW{R}{rf}$ the percentage of time invested by researcher $r$ into projects in
the field $f$. Similarly we define by $\exW{D}{dl}$ the amount of time invested
by developer $d$ into projects requiring the language $l$.

Observe that, given $f$ and $l$, $\exW{P}{pfl}$ corresponds to $\sum\limits_{r,d} \exW{P}{prdfl}$ 
in the previous model. 
The objective function of the LP can thus be simplified through a change of variables.
Given a field $f$, we want to enforce the total time invested into all the projects concerning 
field $f$ to be equal to the total time invested by researchers into projects in field $f$ 
which can be formally expressed as $\forall f \in \domain{\sciExpertise}, 
            \sum\limits_{r \in \sciContributor} \exW{R}{rf} =
            \sum\limits_{p \in \project} \sum\limits_{l \in \expExpertise} \exW{P}{pfl}$.
Similarly we will express a constraint on the projects in a given language $l$ and developers
working with language $l$. The "pairing" constraint requiring that projects see the same amount
of work from researchers and developers is implicitly enforced by the transitivity of the equality.
Finally the constraints imposing an upper bound on the workload of the researchers and developers
can easily be translated with these new weights. The complete rewriting of the linear program is
shown in Fig.~\ref{fig:lprewritten}.

\begin{figure}

\begin{align*}
    \text{maximize } & \sum\limits_{(p, f, l) \in \project} \exW{P}{pfl} \\
    \text{subject to }  
        & \forall f \in \domain{\sciExpertise}, 
            \sum\limits_{r \in \sciContributor} \exW{R}{rf} =
            \sum\limits_{p \in \project} \sum\limits_{l \in \expExpertise} \exW{P}{pfl} \\
        & \forall l \in \domain{\expExpertise}, 
            \sum\limits_{d \in \expContributor} \exW{D}{dl} =
            \sum\limits_{p \in \project} \sum\limits_{f \in \sciExpertise} \exW{P}{pfl} \\
        & \forall r \in \domain{\sciContributor}, \sum\limits_{f \in \domain{\sciExpertise}} \exW{R}{rf} \leq 100\\
        & \forall d \in \domain{\expContributor}, \sum\limits_{l \in \domain{\expExpertise}} \exW{D}{dl} \leq 100\\
\end{align*}

\caption{Rewriting of the lp example}
\label{fig:lprewritten}
\end{figure}

Observe that we introduce a single variable for each tuple of the input tables
and that each of the four constraints above can be unfolded in a number of
inequalities that is bounded by the cardinality of the domains of the attributes
of the input tables. Thus, exploiting the acyclicity of $Q$, we transformed a
linear program having $O(N^3)$ variables to a linear program having $O(N)$
variables and the same optimal value. Our main result is an algorithm to
automatically provide similar rewrittings for a fragment of linear programs on
tables.



\subsection{Contribution of this paper}
\label{sec:contrib}

In this paper, we generalize the technique seen in the previous example to
reduce the size of a linear program whose variables are the answer set of a
conjunctive query. More precisely, we give a generic approach to exploit the
structure of conjunctive queries in order to generate equivalent linear programs
of size polynomial in the size of the database which may be smaller than the
size of the answer set. As deciding whether a conjunctive query has at least
one solution is already $\NP$-complete, it is very unlikely that this approach
is always possible. Hence, we based our approach on the framework of factorized
databases of Olteanu and Z\'{a}vodn\'{y}~\cite{olteanu2012factorised}. A
factorized database is a data structure that allows to represent a table in a
factorized way that can be much smaller than the original table. More precisely,
a factorized representation of a table may be seen as a circuit whose inputs are
tables with one column and performing only Cartesian products and disjoint
unions that we will refer to as {\ddcircuit}s in this paper. Our central
theorem, Theorem~\ref{th:weightingCorrespondence}, shows that a given weighting
of a table represented by a {\ddcircuit} can be canonically represented as a
weighting of the edges of the circuit. This connection gives us a direct way of
rewriting a linear program whose variables are the tuples of a table represented
by a {\ddcircuit} into a linear program whose variables are the edges of the
circuit having the same optimal value.

Several classes of conjunctive queries are known to admit polynomial size
{\ddcircuit}s, the most notorious one being conjunctive queries with bounded
fractional hypertree width~\cite{olteanu_size_2015}. Fractional hypertree
width~\cite{gottlob2016hypertree,gottlob_hypertree_2002} is a parameter
generalizing acyclicity of the hypergraph of a conjunctive
query~\cite{Yannakakis}. Intuitively, it measures how the relations and the
variables of a conjunctive query interact with each other. Various aggregation
problems are tractable on such queries such as counting the number of
answers~\cite{pichler_tractable_2013} or enumerating them with small
delay~\cite{deep2019ranked,bagan2007acyclic}. Most of these problems are solved
by dynamic programming algorithms working in a bottom-up fashion on a tree
shaped decomposition of the query. While one could use this dynamic programming
approach to solve our problem as well, it makes the algorithm harder to
describe. We thus choose to present our results on {\ddcircuit}s directly. In
this setting, given a {\ddcircuit} and a linear program on its answer set, the
rewriting of the linear program is straightforward, even if the correction of
the algorithm is more complicated to establish.

\subsection{Related work}
\label{sec:related}

Factorized databases have already been used to solve complex problems on the
answers of conjunctive queries. They have been used by Schleich et
al.~\cite{schleich2016learning} to perform linear regression on data represented
as a factorized databases more efficiently than by first enumerating the
complete answer set. Fink et al.~\cite{fink2012aggregation} have also used
factorized databases to compute the probability distribution of query results in
the framework of probabilistic databases. To the best of our knowledge, our
result is the first to use this framework to solve optimization problems more
efficiently.

Better integration of optimization problems directly into DBMS has however
already been considered. Cadoli and Mancini~\cite{cadoli07} introduced an
extension of \texttt{SQL} called \texttt{ConSQL} that allows to select optimal
solutions to constraint problems directly in \texttt{SQL}-like queries.
\v{S}ik{\v{s}}nys and Pedersen introduced \texttt{SolveDB} in~\cite{vsikvsnys16}
which is also an extension of \texttt{SQL} that allows to solve optimization
problems directly in the querying language thus simplifying the usual
workflow of feeding an \texttt{AMPL} programs with data extracted from a
database.
Both work mainly focus on designing the language and procedures to
answer query in a very general setting. Though this is a very important and
interesting question, we do not focus much on the language part in this work but
more on the optimization part of solving linear programs.

The need to solve linear programs on answer sets of conjunctive queries has
already appeared in the data mining community, especially for the problem of
estimating the frequency of a pattern in a data graph. In this setting, the
problem is to evaluate the frequency of a subgraph, the pattern, in a larger
graph. A naive way of evaluating this frequency is to use the number of
occurrences of the pattern as a frequency measure. Using this value as a
frequency measure, is problematic since different occurrences of a pattern may
overlap, and as such they share some kind of dependencies that is relevant from
a statistical point of view. More importantly, due to the overlaps, this measure
fails to be anti-monotone, meaning that a subpattern may be counter-intuitively
matched less frequently than the pattern itself. Therefore, the finding of
better anti-monotonic frequency measures -- also known as {\em support measures}
-- has received a lot of attention in the data mining
community~\cite{bringmann08,calders11,fiedler07}. A first idea is to count the
maximal number of non-overlapping patterns~\cite{vanetik02}. However, finding
such a maximal subset of patterns essentially boils down to finding a maximal
independent set in a graph, a notorious NP-complete problem~\cite{garey02}. The
$s$-measure, an alternative approach proposed by Wang et
al.~\cite{wang_efficiently_2013}, consists in counting the number of occurrences
with respect to a weighting of the occurrences taking dependencies into account.
The weighting proposed by the authors is defined via a linear program whose
variables are the occurrences of the pattern. Seeing the problem of finding all
occurrences of a pattern in a graph as a conjunctive query on a database whose
only relation is the edges of the graph, our result directly applies to the
problem of computing the $s$-measure and allows to compute it in polynomial time
if the treewidth of the pattern is bounded.

Binary and Mixed-Integer programs (which are closely related to linear programs) have also 
appeared in the database community as ways to work with possible worlds. In 
\cite{kolaitis_efficient_2013} the authors use binary programming to answer queries on 
inconsistent databases. In this work a binary programs concisely encodes the possible
repairs of the database and is used to eliminate potential answers until a consistent answer
(or the absence thereof) is found. In \cite{tiresias} the authors use Mixed-Integer Programming 
to answer how-to queries expressed in the "Tiresias Query Language" (TiQL) which is based on datalog.
There a Mixed-Integer Program is generated from the TiQL query and database. The MIP is then
optimized using the structure of the database before being fed to an external solver which is 
similar in spirit to the goal of this paper.

\paragraph{Organization of the paper.}
The paper is organized as follows. Section~\ref{sec:preliminaries} introduces
the notions and definitions that are necessary to precisely state our main
result. In particular, this section contains the definition of the fragments of
linear programs on tables we use in this paper. Section~\ref{sec:weight-corr}
presents our main technical tool to solve the general problem by relating
weights given on the rows of a table and on the edges of the circuit
representing it in a factorized manner and explains how a linear program whose
variables are the elements of a relation defined by a circuit can be rewritten
as a linear program whose variables are the edges of the circuit and thus be
solved in polynomial time in the size of the circuit. The correction of our
algorithm is presented in Section~\ref{sec:proofs}. Section~\ref{sec:proj}
exposes some consequences of our results: we mention corollaries to prove the
tractability of bounded fractional hypertreewidth quantifier-free conjunctive
queries. We also present a simple method to handle existential quantification.
Finally, we give suggestions for future work in the conclusion.

\section{Preliminaries}
\label{sec:preliminaries}
\subsection{General notations}

We denote by $\R$ the set of real numbers and by $\R_+$ the set of non-negative
real numbers. For any positive natural number $n$, we denote by $[n]$ the set
$\{1,\dots,n\}$. For two sets $Y$ and $Z$, we denote by $Y^Z$ the set of (total)
functions from $Z$ to $Y$. A finite function
$R=\{(z_1,y_1),\ldots,(z_n,y_n)\}$ where all $z_i$ are pairwise distinct is
denoted by $[z_1/y_1,\ldots,z_n/y_n]$. In the case $n=1$ we
will simply write $z_1/y_1$ instead of $[z_1/y_1]$. Given a function $f \in Y^Z$ and $Z' \subseteq Z$, we
denote by $\projt{f}{Z'}$ the projection of $f$ on $Z'$, that is,
$\projt{f}{Z'} \in Y^{Z'}$ and for every $z \in Z'$,
$\projt{f}{Z'}(z)=f(z)$

A directed graph is a pair $G=(V,E)$ with $E\subseteq V\times V$.
Given a directed graph $G$ we write $\nodes{G}=V$ for the set of
nodes and $\edges{G}=E$ for the set of edges. For any $u\in V$, 
we denote by $\ingoing{u}$ (resp. $\outgoing{u}$) the set of 
ingoing (resp. outgoing) edges of $u$.

Let $\varx$ be a finite set of \emph{attributes} and $\dom$ be a finite domain.
A \emph{(database) tuple with attributes in $\varx$ and domain $\dom$} is a
function $\tau:\varx\to \dom$ mapping attributes to database elements. A
\emph{(database) relation} with attributes $\varx$ and domain
$\dom$ is a set of such tuples, that is, a subset of $\dom^\varx$.


\subsection{Linear programs}
\label{sec:lp}

Linear programs form a wide class of (convex) optimization problems which are
known to be tractable. The first polynomial time algorithm for solving this
problem is the ellipsoid algorithm~\cite{khachiyan1979}. Other more efficient
algorithms have been proposed later~\cite{karmarkar, lee} and there exist many
tools using these techniques together with heuristics to solve this problem in
practice such as {lp\_solve\footnote{\url{http://lpsolve.sourceforge.net/}}} or
{glpk\footnote{\url{https://www.gnu.org/software/glpk/}}}. In this paper, we
will use these tools as black boxes and will only use the fact that finding
optimal solutions of linear programs can be done in polynomial time.

Given a finite set of variables $X$, a \emph{linear form} $\phi$ on $X$ is given
as a set of reals $\mu(x)$ for every $x \in X$. Given an assignment $a : X
\rightarrow \R$ of the variables of $X$, $\phi(a)$ is defined as $\sum_{x \in X}
\mu(x)a(x)$. Through the paper, we denote linear forms as polynomials as
follows:
\[\mu_1 X_1 + \dots + \mu_n X_n. \]

A \emph{linear constraint} $C=(\phi,s,c)$ on variables $X$ is given as a linear
form $\phi$ on $X$, a binary relation $s \in \{\leq, =, \geq, <, >\}$
and a real $c \in \R$. Given an assignment $a : X \rightarrow \R$ of the
variables of $X$, $C$ is satisfied by $a$ if $s(\phi(a), c)$. Through the
paper, we denote linear constraints by adapting the notation of linear programs:
\[\mu_1 X_1 + \dots + \mu_n X_n \leq c. \]

A \emph{linear program} $L=(X, \phi, \calC)$ on variables $X$ is given as a
linear form $\phi$ called the \emph{objective function} and a set of linear
constraints $\calC$ on variables $X$. We write $L$ as:
$$
\begin{gathered}
    \text{max } \phi  \text{ subject to } \calC
\end{gathered}
$$

The \emph{feasible region} of $L$, denoted by $\sol{L}$, is the set of assignments 
$a : X \rightarrow \R$ such that for every $C \in \calC$, $a$ satisfies $C$. We will often refer to
an assignment in the feasible region as a \emph{solution} to the linear program.
An \emph{optimal solution} of $L$ is a solution $a$ in the feasible region that
maximizes the objective function of $L$, i.e. for all $a' \in \sol{L}$, $\phi(a)
\geq \phi(a')$. The \emph{optimal value} of $L$ is defined as $\phi(a)$ where
$a$ is some optimal solution of $L$.

The \emph{size} of $L$ is defined to be $|X| \cdot |\calC|$. Observe that the
size of an encoding of a linear program may be bigger than our definition of
size as we will have to store the coefficient. In this paper, we present a
succinct rewriting of a class of linear program that does not change the
coefficients, so we do not need to be too precise on the way they are encoded.
In order to simplify the presentation of the results, we will not take this
encoding into account.

\subsection{Linear programs on relations}
\label{sec:lpt}

In this paper, we are interested in a class of linear programs whose variables
are the elements of a finite relation. To manipulate this kind of programs, we
need a definition of linear programs parametrized by a relation. We introduce in
this section the notion of \emph{Common Attribute Sum Linear Programs},
abbreviated as {\sumlp}s. A {\sumlp} $L=(\attrset, \dom, \phi, \calC)$ on
attributes $\attrset$ and domain $\dom$ is given as a linear form $\phi$ called
the \emph{objective function} and a set of linear constraints, both on variables
$S(\attrset, \dom) := \{S_{x,d}\}_{x \in \attrset, d \in \dom}$, called the
{\sxdname}s.

From a strictly formal point of view, a {\sumlp} is just a linear program on
variables $S(\attrset,\dom)$. However, we are not interested in the optimal
solutions of this linear program but in the solution of a linear program that is
obtained from a {\sumlp} and a relation $R \subseteq \dom^\attrset$.

Given a {\sumlp} $L$ on attributes $\attrset$, domain $\dom$ and given a
relation $R \subseteq \dom^\attrset$, a solution of $L$ on $R$ is an assignment
$\om: R \rightarrow \R_+$ such that the extension of $\om$ on variables
$S_{x,d}$ defined as $\om(S_{x,d}) := \xdtuplesum{x}{d}{R}$ is a solution of
$L$. An \emph{optimal solution} of $L$ on $R$ is a solution of $L$ on $R$ that
maximizes the objective function of $L$. The \emph{optimal value} of $L$ on $R$
is the value of the objective function on an \emph{optimal solution}.

Alternatively, we could see a solution of $L$ on $R$ as the solution of the
\emph{ground linear programs} $L(R)$ on variables $\{\tau \mid \tau \in R \}$
obtained by replacing the variable $S_{x,d}$ by $\sum\limits_{\substack{\tau \in
    R \\ \tau(x)=d}} \tau$ in $L$ and by adding the constraints $\tau \geq 0$
for every $\tau \in R$. This gives a first naive algorithm to find the optimal
solution of $L$ on $R$: one can compute $L(R)$ explicitly and solve it using an
external solver.

When $R$ is given as the relation defined by a conjunctive query on a database
however, the relation may be much bigger than the size of the database which
results in a huge ground linear program. The goal of this paper is to study how
one can compute the optimal value of such ground linear programs in time
polynomial in the size of the database and not in the size of the resulting
relation.

Fig.~\ref{fig:lpsxd} gives a {\sumlp} $L$ corresponding to the example of
Section~\ref{sec:example}. Indeed, if $R$ is the relation corresponding to the
answer set of query $Q$ of Section~\ref{sec:example}, then $L(R)$ exactly
corresponds to the linear program of Fig.~\ref{fig:lpex}.

\begin{figure}

\begin{align*}
    \text{maximize } & \sum\limits_{v \in \project} \sxd{p}{v} \\
    \text{subject to }  
        & \forall v \in \domain{\sciContributor}, 
            \sxd{r}{v} \leq 100 \\
        & \forall v \in \domain{\expContributor}, 
            \sxd{d}{v} \leq 100 \\
\end{align*}

\caption{Research projects example as a \sumlp}
\label{fig:lpsxd}
\end{figure}

Actually, computing this optimal value for conjunctive query is $\NP$-hard:

\begin{proposition}
  The problem of deciding whether the optimal value of $L(Q(\db))$ is non-zero
  given a quantifier-free conjunctive query $Q$, a database $\db$ and a {\sumlp}
  $L$ is $\NP$-hard.
\end{proposition}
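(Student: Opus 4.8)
The plan is to reduce from the evaluation problem for Boolean conjunctive queries: given a conjunctive query $Q$ and a database $\db$, decide whether $Q(\db) \neq \emptyset$. This problem is $\NP$-complete by the classical result of Chandra and Merlin, and it is exactly the relevant problem here because, for a quantifier-free $Q$ with free variables $x_1, \dots, x_n$, the answer set $Q(\db)$ is non-empty precisely when the Boolean query $\exists x_1 \cdots \exists x_n$ over the same body holds, i.e.\ when there is a homomorphism from $Q$ into $\db$. Hardness for this formulation follows, for instance, by encoding $k$-clique as the query asking for $k$ pairwise adjacent vertices.

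Given such an instance $(Q, \db)$, I would build --- in polynomial time, and without ever materializing the answer set --- a {\sumlp} $L$ over the attributes $\attrset$ of $Q$ and the active domain $\dom$ of $\db$ that merely detects non-emptiness. Concretely, fix any attribute $x_0 \in \attrset$ and take $L$ to be: maximize $\sum_{d \in \dom} \sxd{x_0}{d}$ subject to the single constraint $\sum_{d \in \dom} \sxd{x_0}{d} \leq 1$. The point is that unfolding the {\sxdname}s on $R = Q(\db)$ turns $\sum_{d \in \dom} \sxd{x_0}{d}$ into the total weight $\sum_{\tau \in R} \val{\om}{\tau}$, since every tuple $\tau \in R$ contributes to exactly one variable $\sxd{x_0}{d}$, namely the one with $d = \tau(x_0)$.

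The key step is then to read off the optimal value of $L(R)$. If $R = \emptyset$ there are no variables $\val{\om}{\tau}$, so the objective is the empty sum $0$. If $R \neq \emptyset$, pick any $\tau^\star \in R$, set $\val{\om}{\tau^\star} = 1$ and $\val{\om}{\tau} = 0$ otherwise; this is feasible (solutions are non-negative by definition, and the single constraint is met with equality) and achieves objective value $1$, which is optimal because the constraint caps the objective at $1$. Hence the optimal value of $L(Q(\db))$ is $1$ when $Q(\db) \neq \emptyset$ and $0$ otherwise, so deciding whether it is non-zero decides non-emptiness of $Q(\db)$, giving $\NP$-hardness.

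There is no deep obstacle here beyond bookkeeping: the two things to verify carefully are that the chosen objective really does unfold to the total weight under the $\sxd{x}{d}$ semantics, and that the upper bound $\leq 1$ makes the program bounded, so that \emph{non-zero optimal value} is a clean yes/no criterion rather than being conflated with an unbounded objective. It is worth noting that the reduction does not exploit the structure of $Q$ at all --- the same trivial {\sumlp} works for every instance --- which is exactly why hardness already appears for such a minimal objective and constraint.
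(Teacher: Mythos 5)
Your proof is correct and takes essentially the same approach as the paper: the paper also reduces from conjunctive-query non-emptiness (Chandra--Merlin) using the objective $\sum_{d \in \dom} \sxd{x}{d}$ over a single attribute. The only difference is your added cap constraint $\sum_{d \in \dom} \sxd{x_0}{d} \leq 1$, which keeps the program bounded and makes the ``non-zero optimal value'' criterion cleaner --- a point the paper's constraint-free version leaves implicit, since there the program is unbounded whenever $Q(\db) \neq \emptyset$.
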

\begin{proof}
  Let $Q$ be a quantifier-free conjunctive query and $x$ an attribute of $Q$ on
  domain $\dom$. Let $L$ be the {\sumlp} defined as:
\begin{align*}
    \text{maximize } & \sum\limits_{d \in \dom} \sxd{x}{d}.
\end{align*}
It is not hard to see that, given a database $\db$, the optimal value of
$L(Q(\db))$ is non-zero if and only if $Q(\db) \neq \emptyset$. Since deciding,
given $Q$ and $\db$, whether $Q(\db) \neq \emptyset$ is an $\NP$-complete
problem~\cite{chandra77}, the statement of the proposition follows.
\end{proof}

\subsection{{\ddCircuit}s}
\label{sec:cq}

To solve ground linear programs efficiently, we will work on factorized
representations of relations. We use a generalization of the framework of
factorized databases introduced by Olteanu et al.~\cite{olteanu2012factorised}.
Our factorized representations are defined similarly with only a subset of the
syntactic restrictions (we do not need $f$-representation or
$d$-representations) on the circuits since our algorithm does not need all of
them to work.

A {\em $\{\cup,\times\}$-circuit} on attributes $\varx$ and domain $\dom$ is a
directed acyclic graph $\circuit$ with edges directed from the leaves to a
single root $r$ called the output of $\circuit$ and whose nodes, called
\emph{gates}, are labeled as follows:
\begin{itemize}  
    \item internal gates are labeled by either $\times$ or $\cup$,
    \item inputs, \emph{i.e.} nodes of in-degree $0$, are labeled by
      atomic database relations of the form $x/d$ for an attribute
        $x$ and an element $d \in \dom$. \footnote{Observe that every possible input
      relation could be easily simulated with this restricted input using
      $\times$ and $\cup$. For example, the relation $\{[x/1, y/1, z/1], [x/0,
      y/0, z/0] \}$ can be rewritten as $(x/1 \times y/1 \times z/1) \cup (x/0
      \times y/0 \times z/0)$.}
\end{itemize}

Figure~\ref{fig:circuit} pictures a $\{\cup, \times\}$-circuit on attributes
$\{x,y,z\}$ and domain $\{0,1\}$. 
\begin{figure}
  \centering
  \begin{minipage}{.3\textwidth}
      \includegraphics[width=4cm]{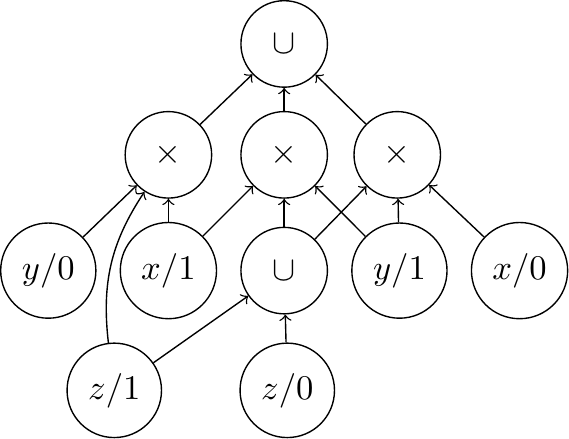}
  \end{minipage}%
  \begin{minipage}{.1\textwidth}
    \begin{tabular}{|c|c|c|}\hline
      $x$ & $y$ & $z$   \\ \hline\hline
      $1$ & $1$ & $0$       \\ \hline
      $1$ & $0$ & $1$       \\ \hline
      $0$ & $1$ & $1$       \\ \hline
      $0$ & $1$ & $0$       \\ \hline
      $1$ & $1$ & $1$       \\ \hline
    \end{tabular}
\end{minipage}
  \caption{Example of a $\{\cup,\times\}$-circuit representing the
   database relation on the right. This circuit is also a \ddcircuit.}
    \label{fig:circuit}   
\end{figure}
We denote by $\var{\circuit}$ the attributes appearing in the database
relations of the input nodes  of
$\circuit$. 
We define the \emph{size} of $\circuit$ as the number of edges in the
underlying graph of $\circuit$. We denote it by $|C|$.

Given a gate $u$ of $\circuit$, we denote by $\circuit_u$ the
subcircuit of $\circuit$ rooted in $u$ (in which
dangling edges from above are removed). We denote by 
{\andgate}s$(\circuit)$ the set of all the {\andgate}s
and by \cupgate{}s$(\circuit)$ the set of all
\cupgate{}s of $\circuit$.

We impose that any $\{\cup, \times\}$-circuit must
  satisfy the following restrictions.
\begin{itemize}  
\item If $u$ is a $\cup$-gate with children $u_1,\dots,u_k$, we impose that
$\var{\circuit_{u_1}} = \dots = \var{\circuit_{u_k}}$.
\item If $u$ is a $\times$-gate with children $u_1, \dots, u_k$, we impose
  that $\var{\circuit_{u_i}} \cap \var{\circuit_{u_j}} = \emptyset$ for every $i
  < j \leq k$. Observe that it implies that $\circuit_{u_i}$ and
  $\circuit_{u_j}$ are disjoint graphs since otherwise they would share an input
  and, thus, an attribute of the relation of this input node.
\end{itemize}
Due to these restrictions, any gate $u$ of a $\{\cup,
  \times\}$-circuit $C$ specifies a database relation $\solcircuit{u}$ 
that we define inductively as follows. If $u$ is an input then $\solcircuit{u}$ is the
relation labelling $u$. If $u$ is a $\times$-gate with children $u_1, \dots,
u_k$, $\solcircuit{{u}}$ is defined as $\solcircuit{u_1} \times \dots \times
\solcircuit{u_k}$. If $u$ is a $\cup$-gate with children $u_1, \dots, u_k$,
$\solcircuit{{u}}$ is defined as $\solcircuit{u_1} \cup \dots \cup
\solcircuit{u_k}$. We define $\solcircuit{}=\solcircuit{r}$ where $r$ is the output of the
circuit. 




The way we have currently defined circuits is not yet enough to allow for
efficient aggregation. Given a $\{\cup,\times\}$-circuit $\circuit$ and an
$\cup$-gate $u$ of $\circuit$ with children $u_1,\dots,u_k$, we say that $u$ is
\emph{disjoint} if $\solcircuit{{u_i}} \cap \solcircuit{{u_j}} = \emptyset$ for
every $i < j \leq k$.
  In our figures and proofs, we will use the symbol $\uplus$ to indicate
  that a union is disjoint. A {\ddcircuit} is a $\{\cup,\times\}$-circuit in
  which every $\cup$-gate is a \orgate{}. 

 Observe that given a {\ddcircuit}
$\circuit$, one can compute the cardinality of its relation $|\solcircuit{}|$ in time linear in the size of
$\circuit$ by a simple inductive algorithm that adds the sizes of the children
of a $\uplus$-gate and multiply the sizes of the children of the $\times$-gates.

Disjointness is a semantic condition and it is $\coNP$-hard to check the
disjointness of a gate given a \udcircuit. In practice however, algorithms
producing {\ddcircuit} ensure the disjointess of all $\cup$-gates during the construction of the
circuit, usually by using $\cup$-gates of the form: $\biguplus_{d \in D}
(\{x/d\} \times C_d)$, so that the different values $d$ assigned to $x$
ensure disjointness.

\begin{theorem}{\cite[Theorem 7.1]{olteanu_size_2015}}
  \label{thm:cq-to-fr} Given a quantifier free conjunctive query $Q$, a
  hypertree decomposition $T$ of $Q$ of fractional hypertreewidth $k$ and a
  database $\db$, one can construct a \ddcircuit $\circuit$ of size at most $|Q|
\cdot |\db|^k$ whose attributes are the variables of $Q$ such that $Q(\db) =
  \solcircuit{}$ in time $O(\log(|\db|)\cdot |\db|^k)$.
\end{theorem}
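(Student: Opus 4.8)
The plan is to compile the decomposition $T$ into a circuit by induction on its tree structure, turning the nesting of bags into an alternation of $\times$- and $\uplus$-gates. First I would root $T$ at a node $r$, assign each atom of $Q$ to one node whose bag contains all of its variables, and for each node $t$ with bag $B_t$ set $S_t = B_t \cap B_{\mathrm{parent}(t)}$ (with $S_r = \emptyset$); write $V_t$ for the set of variables occurring in the subtree rooted at $t$. For every node $t$ and every assignment $\sigma$ of the separator variables $S_t$ to domain values I introduce a gate $g_t^\sigma$ intended to represent exactly the tuples over $V_t \setminus S_t$ that, together with $\sigma$, satisfy every atom assigned to the subtree of $t$. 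Letting $\rho$ range over the consistent extensions $\rho : B_t \to \dom$ of $\sigma$ (those satisfying the atoms assigned to $t$), I define
\[
  g_t^\sigma \;=\; \biguplus_{\rho}\ \Big(\ \prod_{v \in B_t \setminus S_t} v/\rho(v)\ \times \prod_{t'\text{ child of }t} g_{t'}^{\,\rho|_{S_{t'}}}\ \Big).
\]

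The gate $g_t^\sigma$ is a legal \ddcircuit node: by the running-intersection property, once the separator values are fixed the children subtrees share no variable, so the attribute sets of the factors are pairwise disjoint and the $\times$-gate is well formed, while the singletons $v/\rho(v)$ bind the variables newly introduced at $t$. The union is disjoint because all the $\rho$ agree with $\sigma$ on $S_t$ and are pairwise distinct, hence differ on $B_t \setminus S_t$, so tuples contributed by different $\rho$ differ on a newly bound coordinate and it is a genuine \orgate. The construction is a DAG rather than a tree: distinct $\rho$ inducing the same restriction $\rho|_{S_{t'}}$ reuse the same child gate $g_{t'}^{\rho|_{S_{t'}}}$, and this sharing is what keeps the object small.

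Correctness I would prove by a bottom-up induction establishing that the relation represented by $g_t^\sigma$ is exactly the claimed set of extensions of $\sigma$; the inductive step uses that each atom is checked inside a single bag and that agreement on shared variables propagates along $T$, so an inconsistent $\rho$ is pruned automatically — a child admitting no consistent extension yields an empty factor, which annihilates its product and removes that branch of the $\biguplus$. At the root, $S_r = \emptyset$, so $g_r^{\emptyset}$ is the output of $\circuit$ and $\solcircuit{} = Q(\db)$, using that a quantifier-free $Q$ is the natural join of its atoms and that an acyclic decomposition computes this join exactly.

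The crux of the theorem, and the step I expect to be hardest, is the size and time bound, where the fractional hypertree width enters. For a fixed node $t$, the number of gates $g_t^\sigma$ and of consistent bag assignments $\rho$ is the number of tuples in the projection onto $B_t$ of the join of the atoms covering it; since $B_t$ has a fractional edge cover of weight at most $k$, the bound of Atserias, Grohe and Marx caps this quantity by $|\db|^k$ rather than the trivial $|\db|^{|B_t|}$. Summing over the $O(|Q|)$ nodes of $T$ gives the announced $|Q|\cdot|\db|^k$ size. For the running time I would enumerate the consistent $\rho$ at each bag with a worst-case-optimal join algorithm, whose cost meets the AGM bound up to the logarithmic factor for locating domain values in indexed relations, and I would precede the construction with a Yannakakis-style semijoin reduction over $T$ so that only globally extendable bag tuples are ever materialized. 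The two delicate points are therefore (i) proving the per-bag count is $|\db|^k$ — this is exactly the fractional-cover (Shearer/entropy) inequality applied to each bag — and (ii) arranging the enumeration and the gate sharing so that the full circuit is produced within the stated time.
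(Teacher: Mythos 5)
The paper does not actually prove this statement: it is imported verbatim as \cite[Theorem 7.1]{olteanu_size_2015}, with only a remark that the acyclic case follows from a variant of Yannakakis' algorithm. So there is no in-paper proof to compare against; what you have written is a reconstruction of the standard construction of a d-representation from a hypertree decomposition, and it is essentially the construction used in the cited work. Your sketch is sound: the gates $g_t^\sigma$ indexed by separator assignments, the disjointness of the $\uplus$ via the newly bound variables of $B_t\setminus S_t$, the disjointness of attribute sets at $\times$-gates via the running-intersection property, and the per-bag AGM/fractional-cover bound giving $|\db|^k$ gates per node are all the right ingredients, and the edge count telescopes to $O(|Q|\cdot|\db|^k)$ because the sets $B_t\setminus S_t$ partition the variables.

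Two points deserve more care than your sketch gives them. First, for the size bound you must enumerate at node $t$ only those $\rho:B_t\to\dom$ lying in the join of the projections onto $B_t$ of \emph{all} atoms whose fractional weights cover $B_t$, not merely the atoms you happened to assign to $t$; otherwise the count of consistent $\rho$ need not be bounded by $|\db|^k$. This restriction is harmless for correctness (every tuple of $Q(\db)$ survives it) and is what the AGM inequality actually bounds, but it should be stated as part of the construction rather than left to the closing remark about semijoin reduction. Second, the circuit syntax of this paper has no empty relation: a $\sigma$ with no consistent extension, or a child gate whose relation is empty, cannot be represented by an "annihilating factor" and must instead be pruned bottom-up so that only nonempty gates are ever created (your Yannakakis-style pass does exactly this, but it is a necessary part of the construction, not an optimization). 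With those two adjustments the argument goes through and matches the cited result.
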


For the special case of quantifier free acyclic conjunctive query the
theorem follows from a variant of Yanakakis' algorithm. 
We refer the reader to~\cite{gottlob2016hypertree} for the definition of
fractional hypertreewidth as we will not work with this notion directly in this
paper but only with {\ddcircuit}s, using Theorem~\ref{thm:cq-to-fr} to bridge
both notions.

\section{Rewriting linear programs}
\label{sec:weight-corr}

The goal of this paper is to show that one can rewrite the ground linear program
$L(R)$ of a {\sumlp} $L$ when the relation $R$ is structured. More precisely, we
show the following theorem:

\begin{theorem}
  \label{th:rewritecaslp} Given a {\sumlp} $L$ with $m$ constraints and a
  \ddcircuit $C$ both on attributes $X$ and domain $D$, one can construct in
  polynomial time a linear program $L'$ having $|C|$ variables and $m+O(|C|)$
  constraints such that $L'$ and $L(\solcircuit{})$ have the same optimal value.
\end{theorem}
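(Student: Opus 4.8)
The plan is to reduce the statement to an edge/tuple weighting correspondence and then account for variables and constraints. The variables of the ground program $L(\solcircuit{})$ are the tuples $\tau \in \solcircuit{}$, of which there may be exponentially many in $|C|$; the objective and constraints of $L$, however, only see these tuples through the aggregates $\sxd{x}{d} = \sum_{\tau(x)=d}\val{\om}{\tau}$. The key observation is that each such aggregate, and the objective, is a \emph{linear} functional of the tuple weighting $\om\colon \solcircuit{} \to \R_+$, so it suffices to reparametrize the cone of non-negative tuple weightings by a polynomially sized family of edge variables while preserving exactly these linear functionals.

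First I would introduce one variable $W(e)$ per edge $e$ of $C$, giving exactly $|C|$ variables, with the intended meaning that $W(e)$ is the total weight under $\om$ of the tuples of $\solcircuit{}$ whose derivation passes through $e$. Because every $\cup$-gate of a \ddcircuit is an \orgate{}, each tuple arises from a unique child at every union, so this marginalization is well defined and \emph{linear} in $\om$. Under this reading the consistency of an edge weighting is captured by purely local constraints: at each \orgate{} $u$ the weight entering from the children equals the weight leaving toward the parents, i.e. $\Win{u}=\Wout{u}$; at each $\times$-gate $u$, since a derivation through $u$ uses all children at once, the weight on every edge from a child must be synchronized with the total outgoing weight $\Wout{u}$. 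These are the $O(|C|)$ constraints, together with the non-negativity constraints $W(e)\ge 0$.

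Next I would express each CAS variable as a linear form in the edge variables. Since the attribute $x$ takes the value $d$ in exactly the tuples routed through input gates labelled $x/d$, one has $\sxd{x}{d} = \sum_e W(e)$ where $e$ ranges over the outgoing edges of those inputs, and the objective $\phi$ rewrites likewise as a linear form in the $W(e)$. The program $L'$ is then obtained by substituting these linear forms into the $m$ constraints and the objective of $L$ and adjoining the $O(|C|)$ consistency and non-negativity constraints, whence the claimed bounds of $|C|$ variables and $m+O(|C|)$ constraints follow immediately.

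The heart of the argument, and the step I expect to be the main obstacle, is showing that $L'$ and $L(\solcircuit{})$ have the same optimal value, which amounts to proving that $\om \mapsto W$ maps the non-negative tuple weightings \emph{onto} the feasible edge weightings while preserving every $\sxd{x}{d}$ and the objective. The forward direction—that marginalizing a tuple weighting yields a consistent edge weighting with matching aggregates—is a direct computation from the two local identities above. The converse—reconstructing, from any edge weighting satisfying the local constraints, a non-negative tuple weighting with the prescribed marginals—is the content of the weighting correspondence sketched earlier and is delicate: one rebuilds $\val{\om}{\tau}$ by combining the edge contributions along the derivation of $\tau$, renormalizing at each $\times$-gate by $\Wout{u}$ through ratios $\outratio{e}$, and one must argue that the result is non-negative, independent of the order in which children are combined, and consistent across gates with several parents, handling the degenerate case $\Wout{u}=0$ separately. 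Once this aggregate-preserving surjection is established, feasible solutions of the two programs correspond with equal objective values, so their optima coincide.
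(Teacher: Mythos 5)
Your proposal follows essentially the same route as the paper: the same edge variables, the same local ``soundness'' constraints at $\uplus$- and $\times$-gates, the same substitution of each $\sxd{x}{d}$ by the sum over outgoing edges of the input labelled $x/d$, and the same reduction of correctness to a two-way correspondence between tuple-weightings and sound edge-weightings (the paper's Theorems~\ref{th:weightingCompatibility} and~\ref{th:weightingCorrespondence}). You correctly isolate the reconstruction of a non-negative tuple-weighting from a sound edge-weighting (with renormalization at $\times$-gates and separate treatment of the $\Wout{u}=0$ case) as the delicate step, which is exactly what the paper establishes by the bottom-up and top-down inductions of Lemmas~\ref{lem:WToOmUpward} and~\ref{lem:WToOmDownward}.
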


\subsection{Construction of the smaller linear program}
\label{sec:smalllp}

We start by presenting the construction of the new linear program of
Theorem~\ref{th:rewritecaslp}. We fix $L$ a {\sumlp} with $m$ constraints and
$C$ a \ddcircuit, both on attributes $X$ and domain $D$. We construct a linear
program $L'$ having the same optimal value as $L(\solcircuit{})$. The variables
of $L'$ are the edges of $\circuit$. $L'$ has a first set of constraints that we
call the \emph{soundness constraints} that are the following. For every edge $e$
of $\circuit$, we have a constraint $e \geq 0$ and for every gate $u$ of $C$,
\begin{itemize}
\item if $u$ is a \orgate{} that is not the output of the circuit then we have
  the constraint : $\sum_{i \in \ingoing{u}} i = \sum_{o \in \outgoing{u}} o$,
\item if $u$ is a \andgate{} then let $i_1, \dots, i_k$ be the ingoing edges of
  $u$ in an arbitrary order. We have the constraint $i_p = i_{p+1}$ for $1 \leq
  p < k$. If $u$ is not the output of the circuit, we also have the constraint
  $i_1 = \sum_{o \in \outgoing{u}} o$.
\end{itemize}
Observe that there are at most $3|C|$ soundness constraints in $L'$. Now, we add
a constraint $c'$ for each constraint $c$ of $L$ that is obtained as follows: we
replace every occurrence of $S_{x,d}$ in $c$ by $\sum_{e \in \outgoing{x/d}} e$
for every $x \in X$ and $d \in D$ where $\outgoing{x/d}$ denotes the set of
edges of $C$ going out of an input labeled by $x/d$. The objective function of
$L'$ is obtained by applying the same replacement of $S_{x,d}$ variables in the
objective function of $L$.

The number of variables of $L'$ is $|C|$ and that it has $m+3|C|$ constraints so
it has the required size. It is also clear that one can easily construct $L'$
from $L$ and $C$ in polynomial time: it is indeed sufficient to visit each gate
of the circuit to generate the soundness constraints and then to replace each
occurrences of $S_{x,d}$ by the appropriate sum.

It turns out that the optimal values of $L'$ and of $L(\solcircuit{})$ coincide.
To prove this however, we have to understand the structure of {\ddcircuit}
better and the rest of the paper is dedicated to prove the correction of our
construction.

\subsection{Correctness of the construction}

A solution of $L(\solcircuit{})$ can be seen as a weighting of the tuples $\tau
\in \solcircuit{}$. Similarly, a solution of $L'$ can be seen as a weighting of
the edges of $\circuit$. We will show strong connections between both weightings
that will allow us to prove that $L(\solcircuit{})$ and $L'$ have the same optimal
value.

Let $\circuit$ be a \ddcircuit. A \emph{tuple-weighting} $\om$ of $\circuit$ is
defined as a variable assignment of $\solcircuit{}$, e.g. $\om \in
\tweightings$. An \emph{edge-weighting} $\W$ of $\circuit$ is similarly defined
as a variable assignment of $\edges{\circuit}$, e.g. $\W \in \eweightings$. 
Throughout the paper, we will always use the symbol $\om$ for tuple-weightings
and $\W$ for edge-weightings.

\begin{definition}
An edge-weighting $\W$ is \emph{sound} if for every gate $u$ of $\circuit$ we have:
\begin{itemize}
    \item if $u$ is a \orgate{} that is not the output of the circuit then
        $\sum_{i \in \ingoing{u}} W(i) = \sum_{o \in \outgoing{u}} W(o)$,    
    \item if $u$ is a \andgate{} then $\forall i,i' \in \ingoing{u}$:
         $\val{\W}{i} = \val{\W}{i'}$. If $u$ is not the output of the circuit,
         we also have $W(i) = \sum_{o \in \outgoing{u}} W(o)$.
       \end{itemize}
\end{definition}

Given a tuple-weighting $\om$ and a tuple-weighting $\W$ we say that $\W$ is
compatible with $\om$ iff $\forall x \in X, \forall d \in \dom 
\xdtuplesum{x}{d}{\solcircuit{}} = \sum\limits_{e \in \outgoing{x/d}} \val{\W}{e}$.

The correctness of our construction directly follows from the next theorem whose
proof is delayed to Section~\ref{sec:proofs}, connecting tuple-weightings with
sound edge-weightings.

\begin{theorem}
    \label{th:weightingCompatibility}
    Given a {\ddcircuit} $\circuit$. 
    \begin{enumerate}[(a)]
    \item For every tuple-weighting $\om$ of $\circuit$, there exists a sound
      edge-weighting $\W$ such that $\W$ is compatible with $\om$.
    \item Given a sound edge-weighting $\W$ of $\circuit$, there exists a
      tuple-weighting $\om$ of $\circuit$ such that $W$ is compatible with $\om$. 
    \end{enumerate}    
\end{theorem}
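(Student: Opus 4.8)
The plan is to prove both directions by structural induction on the circuit, strengthening the statement so the induction carries through.

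The main subtlety is that "compatible" and "sound" are both global conditions tying together the input edges and the internal structure. For an inductive argument I would want a local invariant at each gate. The natural quantity to track is the total weight flowing out of (or into) each gate: for a tuple-weighting $\om$ of a subcircuit $\circuit_u$, define $\Wout{u} = \sum_{\tau \in \solcircuit{u}} \val{\om}{\tau}$, the total mass of the relation at $u$. I would prove the following strengthened invariant by induction: a sound edge-weighting on $\circuit_u$ corresponds to a tuple-weighting of $\circuit_u$ whose total mass equals the common value carried by the ingoing edges of $u$ (for a $\times$-gate, the shared value; for a $\uplus$-gate, the sum of ingoing weights). The soundness constraints are precisely a flow-conservation law, so this total is well-defined and propagates consistently.

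First I would handle direction (a). Given $\om \in \tweightings$, I want to build a sound $\W$. The idea is to distribute the mass $\val{\om}{\tau}$ of each tuple $\tau$ along the unique structure that produces it. At a $\uplus$-gate the children's relations are disjoint, so each $\tau \in \solcircuit{u}$ comes from exactly one child, and I route $\val{\om}{\tau}$ through the edge to that child; this makes the ingoing/outgoing sums balance by construction. At a $\times$-gate, $\solcircuit{u} = \solcircuit{u_1} \times \cdots \times \solcircuit{u_k}$, so each $\tau$ factors uniquely as $(\tau_1,\dots,\tau_k)$; here the challenge is that all ingoing edges must carry the same value, yet the mass must be split among the children's subtrees consistently. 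I would define the weight an edge $e = \edge{u_i}{u}$ carries to be the mass of the sub-tuple $\tau_i$ induced at $u_i$ (summed appropriately), and verify that the product structure forces all $k$ ingoing sums to coincide with the total mass $\Wout{u}$. I would then check compatibility at the inputs: the outgoing edges of an input $x/d$ collect exactly the tuples $\tau$ with $\tau(x)=d$ that pass through that input, matching $\xdtuplesum{x}{d}{\solcircuit{}}$.

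For direction (b) I would reverse the construction, top-down. Given a sound $\W$, I distribute the total root mass downward: at a $\uplus$-gate I split the incoming mass among children in proportion to $\val{\W}{e}$ on each outgoing edge; at a $\times$-gate, since the children's attribute sets are disjoint, I reconstruct a product distribution whose $i$-th marginal matches the mass pushed into $\circuit_{u_i}$, taking care that the common ingoing value is the correct normalizing total. Defining $\om$ on each $\tau$ as the product of the local proportions along the (unique, for $\times$; chosen, for $\uplus$) path realizing $\tau$ — essentially the quantity $\andOmegaDef$ appearing in the paper's notation — yields a valid tuple-weighting, and soundness of $\W$ guarantees compatibility at the inputs. \textbf{The hardest step is the $\times$-gate in direction (b)}: one must turn edge weights back into a joint tuple-weighting over a Cartesian product while respecting the constraint that all ingoing edges carry one shared value. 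This requires a normalization (dividing by $\Wout{u}$), with a separate trivial case when the total mass is zero, and it is where the disjointness of $\uplus$-gates and the attribute-disjointness of $\times$-gates are both genuinely used.
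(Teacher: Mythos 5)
Your proposal follows essentially the same route as the paper's proof: for (a) the paper defines $W(e)$ as the total mass of the root tuples whose unique witnessing substructure (formalized as the \emph{proof-tree} of the tuple) contains $e$, and derives soundness from combinatorial properties of proof-trees at $\uplus$- and $\times$-gates, exactly your ``route each tuple's mass along the unique structure that produces it''; for (b) it builds precisely the normalized proportional product you describe, with division by the outgoing total and a separate zero-mass case. The only refinements the paper needs beyond your sketch are that, since gates may have out-degree greater than one in the DAG, the inductive objects must be partial tuple-weightings $\om[e]$ indexed by \emph{edges} rather than by gates (your per-subcircuit invariant is stated per gate), and that verifying compatibility at the inputs in (b) takes both a bottom-up lemma ($W(e)=\sum_{\tau}\om[e](\tau)$) and a top-down lemma relating each $\om[e]$ to the root weighting $\om$.
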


We now explain how one can use Theorem~\ref{th:weightingCompatibility} to prove
the correctness of our transformation.

Let $\om: \solcircuit{} \rightarrow \R_+$ be a tuple-weighting that is a
solution of $L(\solcircuit{})$. Let $\W : \edges{\circuit} \rightarrow \R_+$ be
the sound edge-weighting compatible with $\om$ that is given by
Theorem~\ref{th:weightingCompatibility}. We claim that $W$ is a solution of
$L'$. First, $W$ satisfies all the soundness constraints of $L'$ since $W$ is
sound. Moreover, since $W$ is compatible with $\tau$, $\forall x \in X, \forall
d \in \dom \xdtuplesum{x}{d}{\solcircuit{}} = \sum\limits_{e \in \outgoing{x/d}}
\val{\W}{e}$. But this is the value of $\om(S_{x,d})$, which means that $W$ is a
solution of $L'$. It tells us that the optimal value of $L'$ is bigger than the
optimal value of $L$.

Now let $\W : \edges{\circuit} \rightarrow \R_+$ be an edge-weighting of
$\circuit$ that is a solution of $L'$ and let $\om$ be a tuple-weighting
compatible with $W$. We show that $\om$ is a solution of $L(\solcircuit{})$.
Indeed, by definition, $L'$ is obtained by replacing $S_{x,d}$ with
$\sum\limits_{e \in \outgoing{x/d}} e$. Since $\om$ is compatible with $\W$, we
also have $\forall x \in X, \forall d \in \dom \xdtuplesum{x}{d}{\solcircuit{}}
= \sum\limits_{e \in \outgoing{x/d}} \val{\W}{e}$. But this is $\om(S_{x,d})$ by
definition, which means that $\om$ is a solution of $L(\solcircuit{})$. Thus the
optimal value of $L(\solcircuit{})$ is bigger than the optimal value of $L'$.
Hence, $L(\solcircuit{})$ and $L'$ have the same optimal value.

The rest of this paper is dedicated to the proof of
Theorem~\ref{th:weightingCompatibility}.

\section{Connecting edge-weightings and tuple-weightings}
\label{sec:proofs}

In this section, we prove Theorem~\ref{th:weightingCompatibility}. Actually, we
prove a stronger version of this theorem presented in
Section~\ref{sec:th-statement} that gives a better insight into how
tuple-weighting and edge-weighting relate to each others. This stronger version
allows us to prove the result by induction on the circuit. Before stating the
theorem, we however need to introduce some notions on circuits.

\subsection{Proof-trees}
\label{sec:proof-trees}

From now on, to simplify the proofs, we assume wlog that every internal gate of
a {\ddcircuit} has fan-in two. It is easy to see that, by associativity,
$\times$-gates and $\uplus$-gates of fan-in $k>2$ can be rewritten with $k-1$
similar gates which is a polynomial size transformation of the circuit. We also
assume that for every $x \in \var{\circuit}$ and $d \in D$, we have at most one
input labeled with $x/d$. This can be easily achieved by merging all such
inputs.

Let $\circuit$ be a {\ddcircuit} and let $\tau \in \solcircuit{}$. The
\emph{proof-tree} of $\tau$, denoted $\prooftree{\circuit}{\tau}$, is a
subcircuit of $\circuit$ participating to the computation of $\tau$. More
formally, $\prooftree{\circuit}{\tau}$ is defined inductively by starting from
the output as follows: the output of $\circuit$ is in
$\prooftree{\circuit}{\tau}$. Now if $u$ is a gate in
$\prooftree{\circuit}{\tau}$ and $v$ is a child of $u$, then we add $v$ in
$\prooftree{\circuit}{\tau}$ if and only if $\proj{C_v} \in \solcircuit{v}$. A
proof-tree is depicted in red in Figure~\ref{fig:prooftree_example}.

\begin{figure}
  \centering
    \includegraphics[width=4cm]{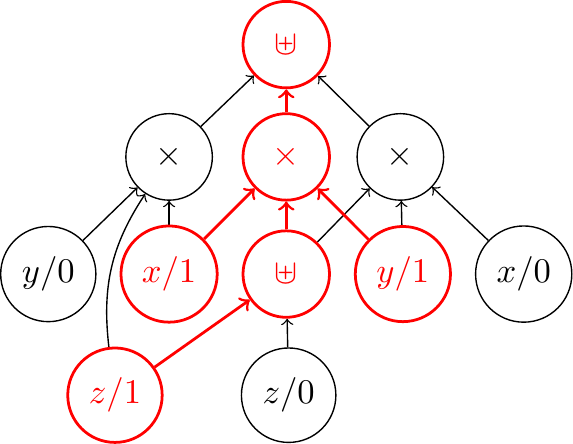}
    \caption{Proof-tree of the tuple $[x/1, y/1, z/1]$}
    \label{fig:prooftree_example}
\end{figure}

\begin{proposition}
    \label{prop:prooftreeUnique}
    
    Given a \ddcircuit{} $\circuit$ and $\tau \in \solcircuit{}$,
    $\somept := \prooftree{\circuit}{\tau}$, the following holds:
    \begin{itemize}
    \item every $\times$-gate of {$\somept$} has all its
      children in $\somept$,
    \item every $\uplus$-gate of $\somept$ has exactly one of
      its children in $\somept$,      
    \item $\somept$ is connected and every gate of $\somept$ has out-degree at
      most $1$ in $\somept$,
    \item for any $x \in \var{\circuit}$, $\somept$ contains exactly one input 
    labeled with $x/\tau(x)$.
    \end{itemize}
\end{proposition}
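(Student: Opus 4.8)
The plan is to prove Proposition~\ref{prop:prooftreeUnique} by structural induction on the {\ddcircuit} $\circuit$, following the inductive definition of the proof-tree $\somept = \prooftree{\circuit}{\tau}$ itself. The key facts to establish are the four bulleted properties, and the cleanest way to get all of them simultaneously is to strengthen the induction hypothesis: I would prove that for any gate $u$ of $\circuit$ and any $\tau' \in \solcircuit{u}$, the proof-tree $\prooftree{\circuit_u}{\projt{\tau'}{\var{\circuit_u}}}$ is a tree rooted at $u$ satisfying all four properties relative to $\var{\circuit_u}$. The global statement is then the special case $u = r$. The base case is an input gate $u$ labeled $x/d$; here $\solcircuit{u} = \{[x/d]\}$, the proof-tree is the single node $u$, and all four properties hold trivially (it is its own unique input for the attribute $x$).

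For the inductive step I would split on the label of $u$. If $u$ is a $\times$-gate with children $u_1, u_2$, then by definition $\tau$ restricted to $u$ decomposes as $\proj{C_{u_1}} \in \solcircuit{u_1}$ and $\proj{C_{u_2}} \in \solcircuit{u_2}$, so \emph{both} children satisfy the membership test $\proj{C_{v}} \in \solcircuit{v}$ and are added to $\somept$; this is exactly the first bullet. The disjointness restriction $\var{\circuit_{u_1}} \cap \var{\circuit_{u_2}} = \emptyset$ from the definition of a $\{\cup,\times\}$-circuit guarantees that the two subtrees are on disjoint attribute sets and, crucially, are disjoint subgraphs, so the union of the two child proof-trees together with $u$ remains a tree and the ``exactly one input per attribute'' count is additive over the two children. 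If $u$ is a $\uplus$-gate (\orgate) with children $u_1, u_2$, then $\solcircuit{u} = \solcircuit{u_1} \uplus \solcircuit{u_2}$ is a \emph{disjoint} union, so $\proj{C_u}$ lies in exactly one of $\solcircuit{u_1}, \solcircuit{u_2}$; hence exactly one child passes the membership test and is added, giving the second bullet, and since $\var{\circuit_{u_1}} = \var{\circuit_{u_2}}$ the surviving child already carries a proof-tree with one input per attribute of $\var{\circuit_u}$.

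For connectedness and the out-degree condition I would argue inductively that each constructed proof-tree is attached to $\circuit$ only through its root $u$: every gate added below $u$ is added because it is a child of an already-added gate, so it has a unique parent \emph{within} $\somept$, giving out-degree at most $1$; connectedness follows since the construction only ever adds children of gates already present. The main obstacle, and the point that genuinely uses the structural hypotheses of the circuit, is establishing that $\somept$ is a \emph{tree} rather than merely a connected subgraph with the out-degree property --- a priori a gate could be reachable from $u$ by two different paths in the DAG $\circuit$. This is where the $\times$-gate disjointness restriction does the real work: because the subcircuits rooted at the children of a $\times$-gate are disjoint as graphs, no gate can be shared between the two child subtrees, so no merging of paths can occur; combined with the $\uplus$-gate keeping only one child, every gate in $\somept$ indeed has a unique path to the root. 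I would make this precise by noting that the out-degree-at-most-one property together with connectedness and acyclicity already forces a tree structure, and then verify the last bullet --- exactly one input labeled $x/\tau(x)$ per attribute $x$ --- by combining the additivity of inputs across $\times$-gates (disjoint attribute sets partitioning $\var{\circuit_u}$) with the single-input-per-attribute invariant inherited from each child, using also the earlier assumption that at most one input is labeled $x/d$ for each $x,d$.
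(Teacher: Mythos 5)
Your proposal is correct and rests on the same three facts as the paper's own proof: the disjointness of the subcircuits below a $\times$-gate (which prevents two upward paths in $\somept$ from merging and yields out-degree at most one), the disjointness of $\uplus$-gates (which selects exactly one child), and the merging of identically labeled inputs (which gives uniqueness of the input $x/\tau(x)$). The only difference is presentational: you package these facts as a strengthened bottom-up structural induction on subcircuits, whereas the paper argues directly on $\somept$, handling the out-degree claim by taking the least common ancestor of two hypothetical parents and deriving a contradiction from the disjointness of the $\times$-children's subcircuits.
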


\begin{proof}
    \label{prf:prooftreeUnique}
    By induction, it is clear that for every gate $u$ of $\somept$, $\proj{C_u}
    \in \solcircuit{u}$. Thus, if $u$ is a $\uplus$-gate of $\somept$, as $u$ is
    disjoint, exactly one of it child $v$ has $\proj{C_u} \in \solcircuit{u}$.
    If $u$ is a $\times$-gate with children $u_1,u_2$, then by definition,
    $\proj{C_u} \in \solcircuit{u}$ if and only if $\proj{C_{u_1}} \in
    \solcircuit{u_1}$ and $\proj{C_{u_2}} \in \solcircuit{u_2}$. Thus both $u_1$
    and $u_2$ are in $\somept$.

    It is clear from definition that $\somept$ is connected since $\somept$ is
    constructed by inductively adding children of gates that are already in
    $\somept$. Now assume that $\somept$ has a gate $u$ of out-degree greater
    than $1$ in $\somept$. Let $u_1,u_2$ be two of its parents and let $v$ be
    their least common ancestor in $\somept$. By definition, $v$ has in-degree
    $2$, this it is necessarily a $\times$-gates with children $v_1,v_2$. Thus
    $u$ is both in $\circuit_{v_1}$ and $\circuit_{v_2}$, which is impossible
    since they are disjoint subcircuits.

    Finally, let $x \in \var{\circuit}$. Observe that if $\somept$ has an input
    labeled $u$ with $x/d$ then $d = \tau(x)$ since $\proj{C_u} \in
    \solcircuit{u}$. Thus, if $\somept$ contains two inputs $v_1$ and $v_2$ on
    attribute $x$, they are both labeled with $x/\tau(x)$ and are thus the same
    input.    
\end{proof}

A proof-tree may be seen as the only witness that a tuple belongs to $\solcircuit{}$.
We define the relation induced by an edge $e$
as the set of tuples of $\solcircuit{}$ such that their proof-tree contains
the edge $e$, that is $\soledge{e} := \{ \tau \in \solcircuit{} \mid e \in
\prooftree{\circuit}{\tau} \}$.

In the following we show some fundamental properties of proof-trees that we will
use to prove the correctness of our algorithm. For the rest of this section, we
fix a {\ddcircuit} $\circuit$ on attributes $\varx$ and domain $\dom$.

\begin{proposition}
  \label{prop:outgoingDisjoint}
  Let $u$ be a gate of $\circuit$ and $e_1, e_2$ $\in \outgoing{u}$
    with $e_1 \neq e_2$.
  Then $\soledge{e_1} \cap \soledge{e_2} = \emptyset$.
\end{proposition}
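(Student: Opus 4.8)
The plan is to argue by contradiction, leaning entirely on the structural facts about proof-trees collected in Proposition~\ref{prop:prooftreeUnique}. The single property I need from it is that inside a proof-tree every gate has out-degree at most one. Since $\soledge{e}$ was defined as the set of tuples $\tau \in \solcircuit{}$ whose proof-tree $\prooftree{\circuit}{\tau}$ contains the edge $e$, the statement really concerns whether one and the same proof-tree can carry both $e_1$ and $e_2$, and the out-degree bound is exactly what forbids this.

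First I would suppose, towards a contradiction, that some tuple $\tau \in \solcircuit{}$ lies in $\soledge{e_1} \cap \soledge{e_2}$. By the definition of the relation induced by an edge, this means precisely that both $e_1$ and $e_2$ are edges of the proof-tree $\prooftree{\circuit}{\tau}$. Next I would use that $e_1$ and $e_2$ are distinct elements of $\outgoing{u}$: since edges are pairs of endpoints, they share the source $u$ but must have distinct targets, say $e_1 = \edge{u}{w_1}$ and $e_2 = \edge{u}{w_2}$ with $w_1 \neq w_2$. Having both edges present in $\prooftree{\circuit}{\tau}$ therefore witnesses that $u$ has out-degree at least two within $\prooftree{\circuit}{\tau}$, which directly contradicts the out-degree bound of Proposition~\ref{prop:prooftreeUnique}. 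Consequently no such $\tau$ can exist, and $\soledge{e_1} \cap \soledge{e_2} = \emptyset$.

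I do not expect a genuine obstacle here: the result is an immediate corollary of Proposition~\ref{prop:prooftreeUnique} once that proposition is in hand. The only point deserving a sentence of care is the translation between the two formulations of ``an edge lies in the proof-tree'', namely that membership of $e_1$ and $e_2$ in $\prooftree{\circuit}{\tau}$ is the same thing as the out-degree of $u$ in $\prooftree{\circuit}{\tau}$ being at least the number of distinct outgoing edges retained, so that two distinct retained outgoing edges force out-degree two. Making that identification explicit is enough to close the argument.
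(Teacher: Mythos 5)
Your argument is correct and is essentially identical to the paper's own proof: both derive a contradiction from the fact that a tuple in $\soledge{e_1} \cap \soledge{e_2}$ would force both edges into $\prooftree{\circuit}{\tau}$, violating the out-degree bound of Proposition~\ref{prop:prooftreeUnique}. Your extra remark making explicit that two distinct retained outgoing edges of $u$ mean out-degree two is a harmless elaboration of the same step.
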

\begin{proof}
  Assume that $\tau \in \soledge{e_1}$ and $\tau \in \soledge{e_2}$ for $e_1,e_2
  \in \outgoing{u}$ with $e_1 \neq e_2$. By definition, it means that both $e_1$ and $e_2$ are in
  $\prooftree{\circuit}{\tau}$ which is a contradiction as
  $\prooftree{\circuit}{\tau}$ has out-degree $1$ by
  Proposition~\ref{prop:prooftreeUnique}.
\end{proof}

\begin{proposition}
    \label{prop:prooftreeInput}    
    Let $x \in \varx$ and $d \in \dom$ and let $u_{x/d}$ be the input of $\circuit$
    labeled with $x/d$. 
    \[ \tupleset{x}{d} = \biguplus\limits_{e \in \outgoing{u_{x/d}}} \soledge{e}. \]
\end{proposition}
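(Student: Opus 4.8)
The plan is to establish the identity by proving the two set inclusions separately and to read off the disjointness of the union directly from Proposition~\ref{prop:outgoingDisjoint}. The whole argument rests on the structural facts about proof-trees collected in Proposition~\ref{prop:prooftreeUnique}, especially that $\prooftree{\circuit}{\tau}$ is connected with every gate of out-degree at most $1$ and that it contains exactly one input labeled $x/\tau(x)$ for each attribute $x$. Disjointness is immediate: the edges indexing the union all lie in $\outgoing{u_{x/d}}$, so for distinct $e_1,e_2 \in \outgoing{u_{x/d}}$ Proposition~\ref{prop:outgoingDisjoint} yields $\soledge{e_1} \cap \soledge{e_2} = \emptyset$, which is exactly what the symbol $\biguplus$ asserts.

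For the inclusion $\supseteq$, I would take $\tau \in \soledge{e}$ for some $e \in \outgoing{u_{x/d}}$. By definition $e \in \prooftree{\circuit}{\tau}$, and since a proof-tree is a connected subcircuit its source gate $u_{x/d}$ also lies in $\prooftree{\circuit}{\tau}$. As noted in the proof of Proposition~\ref{prop:prooftreeUnique}, an input of the proof-tree labeled $x/d'$ must satisfy $d' = \tau(x)$; applying this to $u_{x/d}$ forces $\tau(x) = d$, hence $\tau \in \tupleset{x}{d}$.

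For the inclusion $\subseteq$, take $\tau \in \solcircuit{}$ with $\tau(x) = d$. The last bullet of Proposition~\ref{prop:prooftreeUnique} provides exactly one input of $\prooftree{\circuit}{\tau}$ labeled $x/\tau(x) = x/d$, and by our standing assumption that at most one input carries each label this input is $u_{x/d}$; thus $u_{x/d} \in \prooftree{\circuit}{\tau}$. The delicate step, and the one I expect to be the crux, is to argue that $u_{x/d}$ has a (necessarily unique) outgoing edge inside the proof-tree. This combines two parts of Proposition~\ref{prop:prooftreeUnique}: connectivity forces $u_{x/d}$ --- which, being an input, is not the output (we set aside the trivial one-gate circuit) --- to have out-degree at least $1$ in $\prooftree{\circuit}{\tau}$, while the out-degree bound caps it at $1$. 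Hence there is a unique $e \in \outgoing{u_{x/d}}$ with $e \in \prooftree{\circuit}{\tau}$, giving $\tau \in \soledge{e}$ and closing the inclusion. The only remaining corner case is when no input is labeled $x/d$: then $\outgoing{u_{x/d}}$ is empty, and the same bullet of Proposition~\ref{prop:prooftreeUnique} shows no tuple of $\solcircuit{}$ can take value $d$ on $x$, so both sides are empty and the identity holds vacuously.
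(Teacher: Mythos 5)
Your proof is correct and follows essentially the same route as the paper's: both inclusions are derived from the structural properties of proof-trees in Proposition~\ref{prop:prooftreeUnique} (membership of $u_{x/d}$ in $\prooftree{\circuit}{\tau}$ forcing $\tau(x)=d$ in one direction, and connectivity forcing an outgoing edge of $u_{x/d}$ into the proof-tree in the other), with disjointness read off from Proposition~\ref{prop:outgoingDisjoint}. You simply spell out a few steps (uniqueness of the outgoing edge, the empty corner case) that the paper leaves implicit.
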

\begin{proof}
  Let $x \in \var{\circuit}$ and $d \in \dom$. Let $\tau \in \tupleset{x}{d}$.
  By Proposition \ref{prop:prooftreeUnique}, $u_{x/d}$ belongs to 
  $\prooftree{\circuit}{\tau}$ so there is at least one $e \in
  \outgoing{u_{x/d}}$ that belongs to $\prooftree{\circuit}{\tau}$. Similarly, if
  $\tau \in \soledge{e}$ for some $e \in \outgoing{u_{x/d}}$ then $u_{x/d}$ is also in
  $\prooftree{\circuit}{\tau}$, and then $\tau(x)=d$. This proves the equality.
  The disjointness of the union directly follows from
  Proposition~\ref{prop:outgoingDisjoint}.
\end{proof}

Next we show that the disjoint union of the relations induced by each ingoing
edge of a \orgate{} is equal to the disjoint union of the relations induced by
each outgoing edge of this gate.

\begin{proposition}
    \label{prop:prooftreeOr}
    Let $u$ be an internal \orgate{} of $\circuit$, \\ 
        \[\disjointsolunion{i}{\ingoing{u}}{\circuit} = \disjointsolunion{o}{\outgoing{u}}{\circuit}.\]
\end{proposition}   
\begin{proof}
  Let $S_u := \{\tau \mid u \in \prooftree{\circuit}{\tau} \}$ and $\tau \in
  S_u$. It is clear from Proposition~\ref{prop:prooftreeUnique} that a proof
  tree contains $u$ if and only if it contains at least an edge of $\ingoing{u}$
  and at least an edge of $\outgoing{u}$. Thus, both unions are equal to $S_u$.

  The disjointness of the first union directly follows from the second item
  of Proposition~\ref{prop:prooftreeUnique} and the disjointness of the
  second union from Proposition~\ref{prop:outgoingDisjoint}.
\end{proof}

Finally we show that the disjoint union of the relations induced by the outgoing edges of 
a \andgate{} is equal to the relation induced by each ingoing edge of this gate.

\begin{proposition}
    \label{prop:prooftreeAnd}
    Let $u$ be an internal \andgate{} of $\circuit$,
        \[\forall i \in \ingoing{u} : \soledge{i} = 
        \disjointsolunion{o}{\outgoing{u}}{\circuit} \]
\end{proposition}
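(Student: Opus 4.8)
The plan is to reduce both sides of the claimed identity to the common set $S_u := \{\tau \mid u \in \prooftree{\circuit}{\tau}\}$ of tuples whose proof-tree passes through $u$, mirroring the proof of Proposition~\ref{prop:prooftreeOr} but exploiting that $u$ is now a $\times$-gate rather than a $\uplus$-gate. The whole argument rests on Proposition~\ref{prop:prooftreeUnique} together with the disjointness statement of Proposition~\ref{prop:outgoingDisjoint}.

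First I would show that $\soledge{i} = S_u$ for \emph{every} $i \in \ingoing{u}$. Writing $i = \edge{v}{u}$ for the child $v$ of $u$ from which $i$ emanates, the edge $i$ lies in $\prooftree{\circuit}{\tau}$ exactly when both $v$ and $u$ lie in it. One direction is immediate, since $i \in \prooftree{\circuit}{\tau}$ forces $u \in \prooftree{\circuit}{\tau}$. For the converse I would invoke the first item of Proposition~\ref{prop:prooftreeUnique}: as $u$ is a $\times$-gate, whenever $u$ belongs to a proof-tree all of its children do, so in particular $v$ does and hence $i$ is present. Thus $i \in \prooftree{\circuit}{\tau}$ iff $u \in \prooftree{\circuit}{\tau}$, and this equivalence does not depend on which ingoing edge $i$ was chosen; this is precisely why the statement quantifies over all $i \in \ingoing{u}$, and it is the feature that distinguishes the $\times$-gate case from the $\uplus$-gate case, where only one ingoing edge occurs in each proof-tree.

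Next I would show $S_u = \biguplus_{o \in \outgoing{u}} \soledge{o}$. Since $u$ is an \emph{internal} gate it has at least one outgoing edge, and a proof-tree containing $u$ must reach $u$ from some parent, so it contains at least one outgoing edge of $u$; conversely any proof-tree containing an outgoing edge of $u$ contains $u$. Hence $\tau \in S_u$ iff $\prooftree{\circuit}{\tau}$ contains some $o \in \outgoing{u}$, which gives the set equality $S_u = \bigcup_{o \in \outgoing{u}} \soledge{o}$. The union is disjoint directly by Proposition~\ref{prop:outgoingDisjoint}. Combining this with the previous step yields $\soledge{i} = \biguplus_{o \in \outgoing{u}} \soledge{o}$ for every $i \in \ingoing{u}$, as required.

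There is little genuine difficulty here, and I expect no serious obstacle. The only point that requires care is the first step: it is the $\times$-gate structure, namely that all children are simultaneously present in a proof-tree through $u$, that collapses every ingoing edge to the same relation $S_u$. Everything else is a direct bookkeeping of Proposition~\ref{prop:prooftreeUnique} (connectedness, out-degree at most one, and the treatment of internal versus output gates) and an appeal to Proposition~\ref{prop:outgoingDisjoint} for disjointness.
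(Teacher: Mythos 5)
Your proof is correct and takes essentially the same route as the paper's: both arguments reduce the two sides of the identity to the common set $S_u = \{\tau \mid u \in \prooftree{\circuit}{\tau}\}$, using Proposition~\ref{prop:prooftreeUnique} (all children of a $\times$-gate lie in the proof-tree, and out-degree is at most one) and Proposition~\ref{prop:outgoingDisjoint} for the disjointness of the union over $\outgoing{u}$. The extra detail you supply (the explicit equivalence $i \in \prooftree{\circuit}{\tau}$ iff $u \in \prooftree{\circuit}{\tau}$, and the connectedness argument for why some outgoing edge must occur) only fleshes out what the paper states more tersely.
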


\begin{proof}
  Let $S_u := \{\tau \mid u \in \prooftree{\circuit}{\tau} \}$. By
  Proposition~\ref{prop:prooftreeUnique}, it is clear that if $u$ is in a proof
  tree $\somept$, then every edge of $\ingoing{u}$ is also in $\somept$ and
  exactly one edge of $\outgoing{u}$ is in $\somept$. Thus, both sets in the
  statement are equal to $S_u$. The disjointness of the union follows directly
  from Proposition~\ref{prop:outgoingDisjoint}.
\end{proof}

\subsection{Theorem statement}
\label{sec:th-statement}

Given a tuple-weighting $\om$ of a {\ddcircuit} $\circuit$, it naturally induces
an edge-weighting $W$ defined for every $e \in \edges{C}$ as $W(e) := \sum_{\tau \in
  \soledge{e}} \om(e)$. We call $W$ the \emph{edge-weighting induced by $\om$}.

\begin{theorem}
    \label{th:weightingCorrespondence}
    Given a {\ddcircuit} $\circuit$. 
    \begin{enumerate}[(a)]
    \item\label{it:omtoW} For every tuple-weighting $\om$ of $\circuit$, the
      edge-weighting $\W$ induced by $\om$ is sound.
    \item\label{it:Wtoom} Given a sound edge-weighting $\W$ of $\circuit$, there exists a
      tuple-weighting $\om$ of $\circuit$ such that $W$ is the edge-weighting
      induced by $\om$. 
    \end{enumerate}    
\end{theorem}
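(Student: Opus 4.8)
The plan is to treat part~(a) as a direct corollary of the proof-tree propositions of Section~\ref{sec:proof-trees}, and to reserve the real work for the reconstruction in part~(b), which I would handle by a top-down flow argument on the whole circuit rather than by a naive recursion.

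For part~(a), fix a tuple-weighting $\om$ and let $\W$ be the induced edge-weighting, so $\W(e) = \sum_{\tau \in \soledge{e}} \om(\tau)$. I would check the two soundness conditions gate by gate. At an internal \orgate{} $u$, Proposition~\ref{prop:prooftreeOr} gives $\biguplus_{i \in \ingoing{u}} \soledge{i} = \biguplus_{o \in \outgoing{u}} \soledge{o}$, both unions being disjoint (the first by Proposition~\ref{prop:prooftreeUnique}, the second by Proposition~\ref{prop:outgoingDisjoint}); summing $\om$ over this common set and exchanging the order of summation yields $\sum_{i \in \ingoing{u}} \W(i) = \sum_{o \in \outgoing{u}} \W(o)$. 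At a \andgate{} $u$, Proposition~\ref{prop:prooftreeAnd} shows that $\soledge{i}$ equals the same set $S_u = \{\tau \mid u \in \prooftree{\circuit}{\tau}\}$ for every $i \in \ingoing{u}$ (and equals $\biguplus_{o} \soledge{o}$ when $u$ is internal); hence all the $\W(i)$ coincide and, in the internal case, equal $\sum_{o \in \outgoing{u}} \W(o)$. This is exactly soundness, and nothing is required at the output \orgate{}.

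For part~(b) the tempting idea — recursing on the two subcircuits below the output — breaks down, and this is the main obstacle. If the output $r$ is a $\times$-gate the two child subcircuits are disjoint graphs sharing $r$ as their only parent, so the recursion is clean: one splits the common value $\W(i_1)=\W(i_2)=T$, recurses, and glues the two tuple-weightings multiplicatively. But if $r$ is a \orgate{}, the subcircuits $\circuit_{u_1}$ and $\circuit_{u_2}$ may share gates, so one cannot simply restrict $\W$ to each branch: a shared edge carries flow destined for both branches. I would therefore define $\om$ globally through branching ratios. At each \orgate{} $u$ with $\sum_{i' \in \ingoing{u}} \W(i') > 0$, set $\rho(i) = \W(i)/\sum_{i' \in \ingoing{u}} \W(i')$, let $T$ be the total flow at the root, and for each $\tau$ put $\om(\tau) = T \cdot \prod_{u} \rho(i_u)$, the product ranging over the \orgate{}s $u$ of the tree-shaped proof-tree $\prooftree{\circuit}{\tau}$, where $i_u$ is the unique ingoing edge of $u$ kept in that proof-tree. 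Gates of zero incoming flow are treated separately with weight $0$, which soundness forces by propagating the vanishing flow to all incident edges.

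The hard part will be proving that this $\om$ induces exactly $\W$, i.e. $\sum_{\tau \in \soledge{e}} \om(\tau) = \W(e)$ for every edge, and I would isolate two inductions. First, a bottom-up induction establishes a normalization identity: the sum over all local proof-trees in $\circuit_u$ of their $\rho$-products equals $1$ whenever the flow at $u$ is positive; the base case is an input, the $\times$-case factors over the two disjoint subcircuits, and the \orgate{} case reduces to $\rho(i_1)+\rho(i_2)=1$, which is precisely the conservation law of soundness. Second, a top-down induction shows that the total weight $\sum_{\tau : u \in \prooftree{\circuit}{\tau}} \om(\tau)$ equals the flow at $u$ (namely $T$ at the root and $\sum_{o \in \outgoing{u}} \W(o)$ at an internal gate); here the contribution passed from a parent $p$ telescopes to $\W(\edge{u}{p})$ using soundness at $p$, since $\times$-gates forward the full flow and \orgate{}s forward the $\rho$-fraction. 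Combining the two identities — the weight at $u$ splits across its ingoing edges in proportion to $\rho$ at a \orgate{}, and is carried in full on each ingoing edge at a $\times$-gate — yields $\sum_{\tau \in \soledge{e}} \om(\tau) = \W(e)$. The one delicate point to nail down is that each proof-tree product factors into independent ``above $u$'' and ``below $u$'' contributions, which rests on the disjointness of sibling subcircuits under $\times$-gates recorded in Proposition~\ref{prop:prooftreeUnique}.
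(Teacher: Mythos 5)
Your proposal is correct and follows essentially the same route as the paper: part~(a) is the paper's argument from Propositions~\ref{prop:prooftreeOr} and~\ref{prop:prooftreeAnd} verbatim, and for part~(b) your closed-form product $T\prod_{u}\rho(i_u)$ over the $\uplus$-gates of a proof-tree is exactly the unfolding of the paper's recursive per-edge weightings $\om[e]$, with your bottom-up normalization identity and top-down flow identity corresponding to Lemma~\ref{lem:WToOmUpward} and Lemma~\ref{lem:WToOmDownward} respectively. The only presentational difference is that the paper's top-down lemma tracks the weight of each partial tuple, $\om[e](\tau')=\sum_{\tau\in\solproj{e}{\tau'}}\om(\tau)$, rather than the aggregate flow through a gate, which packages the ``above/below'' factorization you rightly flag as the delicate point directly into the induction hypothesis.
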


We will provide the proof for this theorem in the following sections. 

Observe that if if an edge-weighting $\W$ is induced by a tuple-weighting $\om$ then 
\begin{align*}
\xdtuplesum{x}{d}{\solcircuit{}} 
    = \sum\limits_{e \in \outgoing{x/d}} \sum\limits_{\tau \in \soledge{e}} \val{\om}{\tau}
    = \sum\limits_{e \in \outgoing{x/d}} \val{\W}{e} 
\end{align*}
so $\W$ is also compatible with $\om$.

Thus Theorem~\ref{th:weightingCorrespondence} is indeed a generalization of
Theorem~\ref{th:weightingCompatibility}.

\subsection{Proof of Theorem~\ref{th:weightingCorrespondence}~\ref{it:omtoW}}
\label{sec:tuples-edges}

This section is dedicated to the proof of Theorem~\ref{th:weightingCorrespondence}~\ref{it:omtoW}, 
namely that, given a {\ddcircuit} $\circuit$ and a tuple-weighting $\om$ of $\circuit$, 
the edge-weighting $\W$ of $\circuit$ induced by $\om$ defined as $\val{\W}{e} := \sum_{\tau \in
  \soledge{e}} \om(\tau)$ is sound.

We will see that the soundness of $\W$ follows naturally from the properties of
proof-trees of the previous section. 
We prove that $\W$ is sound by checking the case of \orgate{s} and \andgate{s} separately.

    \begin{itemize}
    \item We first have to show that for every \orgate{} $u$ of $\circuit$, it
      holds that $\sum_{e \in \outgoing{u}} \W(e) = \sum_{e \in \ingoing{u}}
      \W(e)$. This is a consequence of Proposition~\ref{prop:prooftreeOr}. Let
      $u$ be a \orgate{} of $\circuit$.

        \begin{align*}
          \sum_{e \in \outgoing{u}} \W(e)  &  = \sum_{e \in \outgoing{u}} \sum_{\tau \in \soledge{e}} \om(\tau) \proofnote{by definition of $\W$}\\
          & = \sum_{\tau \in R} \om(\tau) \proofnote{where $R = \biguplus_{e \in \outgoing{u}} \soledge{e}$}
        \end{align*}

        The disjointness of the union in $R$ has been proven in Proposition
        \ref{prop:prooftreeOr}, which also states that $R = \biguplus_{e \in
          \ingoing{u}} \soledge{e}$. Thus, the last term in the sum can be
        rewritten as
        \begin{align*}
          \sum_{\tau \in R} \om(\tau)  &  = \sum_{e \in \ingoing{u}} \sum_{\tau \in \soledge{e}} \om(\tau)\\
          & = \sum_{e \in \ingoing{u}} \W(e)  \proofnote{by definition of $\W$}
        \end{align*}


      \item We now show that for every \andgate{} $u$ of $\circuit$ and for
        every edge $i \in \ingoing{u}$ going in $u$, it holds that $\sum_{e \in
          \outgoing{u}} \W(e) = \W(i)$. Let $u$ be a \andgate{} and $i \in
        \ingoing{u}$. The proof is very similar to the previous case but is now
        a consequence of Proposition~\ref{prop:prooftreeAnd}. As before, we have
        $\sum_{e \in \outgoing{u}} \W(e) = \sum_{\tau \in R} \om(\tau)$ where $R
        = \biguplus_{e \in \outgoing{u}} \soledge{e}$. The disjointness of the
        union in $R$ has been proven in Proposition \ref{prop:prooftreeAnd},
        which also implies that $R = \soledge{i}$. Thus, we have:

        \begin{align*}
          \sum_{e \in \outgoing{u}} \W(e) &  = \sum_{\tau \in \soledge{i}} \om(\tau)\\
          & = \W(i)  \proofnote{by definition of $\W$}
        \end{align*}
        

    \end{itemize}

\subsection{Proof of Theorem~\ref{th:weightingCorrespondence}~\ref{it:Wtoom}}
\label{sec:edges-tuples}

This section is dedicated to the proof of Theorem~\ref{th:weightingCorrespondence}~\ref{it:Wtoom}, 
namely that, given a circuit $\W$ and a sound edge-weighting $\W$ of $\circuit$, 
there exists a tuple-weighting $\om$ of $\circuit$ such that, for every edge $e$ of $\circuit$,
$\sum_{\tau \in \soledge{e}} \om(\tau) = \val{\W}{e}$.

In this section, we fix a {\ddcircuit} $\circuit$ and a sound edge-weighting
$\W$ of its edges. 
We assume wlog that the root $r$ of $\circuit$ has a single ingoing edge which we call the \emph{output edge} $\rootedge$. 
We construct $\om$ by induction. For every edge $e =
\edge{u}{v}$ of $\circuit$, we inductively construct a tuple weighting $\om[e]$
of $\solcircuit{u}$. We will then choose $\om = \om[o_r]$ 
and show that this tuple-weighting verifies $\sum_{\tau \in
  \soledge{e}} \om(\tau) = \val{\W}{e}$ for every edge $e$ of $\circuit$.

For $e = \edge{u}{v}$, we define $\om[e] : \solcircuit{u} \rightarrow \R_+$
inductively as follows:
\begin{itemize}
\item \textbf{Case 1}: $u$ is an input labeled with $x/d$ then $\solcircuit{u}$ contains only
  the tuple $x/d$. We define $\om[e](x/d) := W(e)$.
\item \textbf{Case 2}: $u$ is a \orgate{} with children $u_1, u_2$. Let $e_1 =
  \edge{u_1}{u}$ and $e_2 = \edge{u_2}{u}$. In this case, given $\tau \in
  \solcircuit{u}$, we have by definition that $\tau \in \solcircuit{u_1}$ or
  $\tau \in \solcircuit{u_2}$. Assume wlog that $\tau \in \solcircuit{u_1}$. If
  $\sum_{f \in \outgoing{u}} W(f) \neq 0$, we define: $\om[e](\tau) := W(e)
  {\om[e_1](\tau) \over W(e_1)+W(e_2)}$. Observe that since $\W$ is sound,
  $W(e_1)+W(e_2) = \sum_{f \in \outgoing{u}} W(f) \neq 0$. Otherwise
  $\om[e](\tau) := 0$.
\item \textbf{Case 3}: $u$ is a \andgate{} with children $u_1, u_2$. Let $e_1 =
  \edge{u_1}{u}$ and $e_2 = \edge{u_2}{u}$. In this case, given $\tau \in
  \solcircuit{u}$, we have by definition that $\tau = \tau_1 \times \tau_2$ with
  $\tau_1 \in \solcircuit{u_1}$ and $\tau_2 \in \solcircuit{u_2}$. If $W(e_1)
  \neq 0$ and $W(e_2) \neq 0$, we define: $\om[e](\tau) := W(e) {\om[e_1](\tau_1)
    \over W(e_1)}{\om[e_2](\tau_2) \over W(e_2)}$. Otherwise $\om[e](\tau) := 0$.
\end{itemize}




%

We begin our proof by showing that as we construct each $\om[e]$ through a
bottom-up induction there is a relation between $\val{\W}{e}$ and $\om[e]$ which
resembles the property we aim to prove in this section.

\begin{lemma}
    \label{lem:WToOmUpward}
    For every gate $u$ of $\circuit$ and $e = \edge{u}{v} \in \edges{\circuit}$,
    \[\val{\W}{e} = \sum_{\tau \in \solcircuit{u}} \om[e](\tau). \]
\end{lemma}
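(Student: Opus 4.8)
The plan is to prove the identity $\val{\W}{e} = \sum_{\tau \in \solcircuit{u}} \om[e](\tau)$ by structural induction on the gate $u$ at the tail of the edge $e = \edge{u}{v}$, following exactly the three-case definition of $\om[e]$. First I would treat the base case where $u$ is an input labeled $x/d$: here $\solcircuit{u} = \{x/d\}$ and $\om[e](x/d) := W(e)$ by Case 1, so the sum is trivially $W(e)$. The inductive cases are where the real work lies, and in each I expect to split on whether the relevant edge-weights vanish, since the definition of $\om[e]$ does.

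\emph{Or-gate case.} Suppose $u$ is a \orgate{} with children $u_1, u_2$, and write $e_1 = \edge{u_1}{u}$, $e_2 = \edge{u_2}{u}$. Since $u$ is disjoint, $\solcircuit{u} = \solcircuit{u_1} \uplus \solcircuit{u_2}$, so I can split the sum $\sum_{\tau \in \solcircuit{u}} \om[e](\tau)$ into the contribution from $\solcircuit{u_1}$ and from $\solcircuit{u_2}$. When $\sum_{f \in \outgoing{u}} W(f) \neq 0$, each term is $\om[e](\tau) = W(e)\,\om[e_1](\tau)/(W(e_1)+W(e_2))$ on the $u_1$ side (and symmetrically on the $u_2$ side), so the two partial sums are $\frac{W(e)}{W(e_1)+W(e_2)}\sum_{\tau \in \solcircuit{u_1}}\om[e_1](\tau)$ and the analogous $u_2$ term. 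Applying the induction hypothesis to $e_1$ and $e_2$ replaces these inner sums by $W(e_1)$ and $W(e_2)$, giving $\frac{W(e)}{W(e_1)+W(e_2)}\bigl(W(e_1)+W(e_2)\bigr) = W(e)$; here I use soundness, which guarantees $W(e_1)+W(e_2) = \sum_{f \in \outgoing{u}} W(f)$, so the denominator is exactly the nonzero quantity assumed. When $\sum_{f \in \outgoing{u}} W(f) = 0$, all $\om[e](\tau) = 0$ by definition, so the sum is $0$; and soundness forces $W(e)=0$ as well (it is one of the outgoing edges summed), so both sides vanish.

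\emph{And-gate case.} Suppose $u$ is a \andgate{} with children $u_1,u_2$ and edges $e_1,e_2$ as above. By the disjointness restriction on $\times$-gates, every $\tau \in \solcircuit{u}$ factors uniquely as $\tau = \tau_1 \times \tau_2$ with $\tau_i \in \solcircuit{u_i}$, so $\sum_{\tau \in \solcircuit{u}} \om[e](\tau)$ is a sum over pairs $(\tau_1,\tau_2)$. When $W(e_1) \neq 0$ and $W(e_2)\neq 0$, substituting $\om[e](\tau) = W(e)\,\frac{\om[e_1](\tau_1)}{W(e_1)}\frac{\om[e_2](\tau_2)}{W(e_2)}$ lets the double sum factor as $\frac{W(e)}{W(e_1)W(e_2)}\bigl(\sum_{\tau_1}\om[e_1](\tau_1)\bigr)\bigl(\sum_{\tau_2}\om[e_2](\tau_2)\bigr)$, and the induction hypothesis turns the two inner sums into $W(e_1)$ and $W(e_2)$, yielding $W(e)$. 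When one of $W(e_1),W(e_2)$ is zero, every $\om[e](\tau)=0$ so the left sum is $0$; I then need $W(e)=0$, which follows from soundness: for a \andgate{} soundness gives $W(e_1)=W(e_2)$ and $W(e) \le \sum_{o \in \outgoing{u}} W(o) = W(e_1)$, so if either ingoing edge is zero then $W(e_1)=W(e_2)=0$ and hence $W(e)=0$.

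The main obstacle I anticipate is handling the degenerate zero-weight branches cleanly, since these are precisely where $\om[e]$ is defined by a side case rather than the generic formula; in each such branch I must invoke soundness to argue that $W(e)$ itself is forced to $0$, so that the vanishing of the left-hand sum matches the right-hand side. The generic (nonzero) branches are routine once the disjoint-union and unique-factorization structure from Propositions \ref{prop:prooftreeOr} and the $\times$-gate restriction is in place, together with the induction hypothesis applied to $e_1$ and $e_2$.
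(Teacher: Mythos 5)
Your proposal is correct and follows essentially the same route as the paper's proof: a bottom-up induction on the gate $u$, with the sum over $\solcircuit{u}$ split as a disjoint union for a \orgate{} and factored as a product for a \andgate{}, the induction hypothesis applied to the ingoing edges $e_1,e_2$, and soundness used to cancel the normalizing denominators. The only cosmetic difference is that the paper dispatches all degenerate zero-weight branches once at the outset, observing that $\sum_{o\in\outgoing{u}}\W(o)=0$ forces $\W(e)=0$ and $\om[e]\equiv 0$, whereas you handle them inside each case; by soundness the two case splits coincide, so this changes nothing.
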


\begin{proof}

  Let $e=\edge{u}{v}$ be an edge of $\circuit$. Observe that when $\sum_{o \in
    \outgoing{u}} \W(o) = 0$, then $\W(e)=0$ since $e \in \outgoing{u}$ and $\W$
  has positive value. Moreover, by definition of $\om[e]$, for every $\tau \in
  \solcircuit{u}$, $\om[e](\tau) = 0$. In particular, $\sum_{\tau \in
    \solcircuit{u}} \om[e](\tau) = 0 = \W(e)$. In this case then, the lemma
  holds.

  In the rest of the proof, we now assume that $\sum_{o \in \outgoing{u}}
  \W(o) \neq 0$. We show the lemma by induction on the nodes of $\circuit$ from the
  leaves to the root.

  \textbf{Base case}: $u$ is a leaf labelled with $x/d$. Let $e$ be an outgoing
  edge of $u$. Observe that $\solcircuit{u}$ contains a single tuple $x/d$ and
  that, by definition of $\om[e]$, $\val{\W}{e} = \om[e](x/d) = \sum_{\tau \in
    \solcircuit{u}} \om[e](\tau)$.
 
  \textbf{Inductive case}: Now let $u$ be an internal gate of $\circuit$ with
  children $u_1,u_2$ and let $e_1 = \edge{u_1}{u}$ and $e_2 = \edge{u_2}{u}$, as
  depicted in Figure~\ref{fig:inductivestep}.

  \begin{figure}[H]
    \centering    
    \includegraphics[width=2cm]{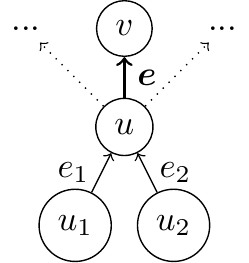}
    \caption{Inductive step notations.}
    \label{fig:inductivestep}
  \end{figure}

  \textbf{Case 1}: Assume that $u$ is a \orgate{}. Let
      $W=W(e_1)+W(e_2)$. Since $u$ is disjoint, given $\tau \in \solcircuit{u}$,
      either $\tau \in \solcircuit{u_1}$ or $\tau \in \solcircuit{u_2}$ but not
      both. It follows:
 
            \begin{align*}
              \sum_{\tau \in \solcircuit{u}} \om[e](\tau) & = \sum_{\tau \in \solcircuit{u_1}} \om[e](\tau) + \sum_{\tau \in \solcircuit{u_2}} \om[e](\tau) \\
              & = \sum_{\tau \in \solcircuit{u_1}}{W(e) \over W} \om[e_1](\tau_1) + \sum_{\tau \in \solcircuit{u_2}} {W(e) \over W}\om[e_2](\tau_2)
            \end{align*}
            by definition of $\om[e]$. Observe that by induction we have $W(e_1)
            = \sum_{\tau \in \solcircuit{u_1}}\om[e_1](\tau_1)$ and $W(e_2) =
            \sum_{\tau \in \solcircuit{u_2}}\om[e_2](\tau_2)$. Thus, by taking
            the constants out and using this identity, it follows:
            \begin{align*}
              \sum_{\tau \in \solcircuit{u}} \om[e](\tau) & =  {W(e) \over W} \sum_{\tau \in \solcircuit{u_1}}\om[e_1](\tau_1)+ {W(e) \over W} \sum_{\tau \in \solcircuit{u_2}}\om[e_2](\tau_2) \\
                                                          & = {W(e) \over W}W(e_1)+{W(e) \over W}W(e_2) \\
                                                          & = W(e) {W(e_1) + W(e_2) \over W} \\
                                                          & = W(e).
            \end{align*}
            
  \textbf{Case 2}: Assume that $u$ is a \andgate{}. 
            Applying the definition of $\om[e]$, we get:
            \begin{align*}
              \sum_{\tau \in \solcircuit{u}} \om[e](\tau) & = \sum_{\tau \in \solcircuit{u}} W(e){\om[e_1](\projt{\tau}{\var{C_{u_1}}}) \over \val{\W}{e_1}}{\om[e_2](\projt{\tau}{\var{C_{u_2}}}) \over  \val{ \W}{e_2}} 
            \end{align*}
            Remember that by definition of {\ddcircuit}s, $\solcircuit{u} =
            \solcircuit{u_1} \times \solcircuit{u_2}$. That is, $\tau \in
            \solcircuit{u}$ if and only if $\tau_1 :=
            \projt{\tau}{\var{C_{u_1}}} \in \solcircuit{u_1}$ and $\tau_2 :=
            \projt{\tau}{\var{C_{u_2}}} \in \solcircuit{u_2}$. Thus, we can
            rewrite the last sum as:
            \begin{align*}
              \sum_{\tau \in \solcircuit{u}} \om[e](\tau) & = \sum_{\tau_1 \in \solcircuit{u_1}}\sum_{\tau_2 \in \solcircuit{u_2}} W(e){\om[e_1](\tau_1) \over W(e_1)}{\om[e_2](\tau_2) \over W(e_2)}
            \end{align*}
            By taking the constant $W(e)$ out of the sum and observing that the
            sum is now separated into two independent terms, we have:
            \begin{align*}
            \sum_{\tau \in \solcircuit{u}} \om[e](\tau) & = W(e)\sum_{\tau_1 \in \solcircuit{u_1}} {\om[e_1](\tau_1) \over W(e_1)}\sum_{\tau_2 \in \solcircuit{u_2}} {\om[e_2](\tau_2) \over W(e_2)}                                                            
            \end{align*}

            By induction, $W(e_i) = \sum_{\tau \in
              \solcircuit{u_i}}\om[e_i](\tau_i)$ for $i=1,2$. Hence both sums of
            the last term are equal to $1$, which means that $$\sum_{\tau \in
              \solcircuit{u}} \om[e](\tau) = W(e).$$
            

\end{proof}

We choose $\om = \om[o_r]$ where $o_r$ is the output edge.
Lemma~\ref{lem:WToOmUpward} is however not enough to prove
Theorem~\ref{th:weightingCorrespondence} as it only gives the equality $W(e) =
\sum_{\tau \in \soledge{e}} \om(\tau)$ for $e = o_r$. Fortunately we can show
that it holds for every edge $e$ of the circuit. We actually prove a stronger
property, that $\om[e]$ is, in some sense, a projection of $\om$.

Given an edge $e = \edge{u}{v}$ of $\circuit$ and $\tau' \in \solcircuit{u}$, we
denote by $\solproj{e}{\tau'}$ the set of tuples $\tau$ of $\soledge{e}$ such
that $\projt{\tau}{\var{C_u}} = \tau'$. We prove the following:

\begin{lemma}
  \label{lem:WToOmDownward}
    For every $e = \edge{u}{v} \in \edges{\circuit}$, for every $\tau' \in \solcircuit{u}$,
    \[\om[e](\tau') = \sum_{\tau \in \solproj{e}{\tau'}} \om(\tau).\]
\end{lemma}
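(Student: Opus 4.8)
The plan is to prove the lemma by a top-down induction on the edges of $\circuit$, ordered by a topological sort of their target gate so that the output edge $\rootedge$ comes first. For the base case $e = \rootedge = \edge{u}{r}$ (with $u$ the source of $\rootedge$), note that $\var{\circuit_u} = \var{\circuit}$ and that every proof-tree contains $\rootedge$, so $\soledge{\rootedge} = \solcircuit{}$ and the projection onto $\var{\circuit_u}$ is the identity; hence $\solproj{\rootedge}{\tau'} = \{\tau'\}$, and since $\om = \om[\rootedge]$ by definition, the desired equality $\om[\rootedge](\tau') = \om(\tau')$ is immediate.

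For the inductive step I would fix an internal gate $u$ with children $u_1,u_2$ and ingoing edges $e_1 = \edge{u_1}{u}$, $e_2 = \edge{u_2}{u}$, assume the statement for every $e \in \outgoing{u}$ (these have targets strictly above $u$), and prove it for $e_1$, the case of $e_2$ being symmetric. The first move is a purely local identity obtained by summing the defining formula of $\om[e]$ over $\outgoing{u}$ and simplifying with soundness of $\W$ and Lemma~\ref{lem:WToOmUpward}. When $u$ is a \orgate{} and $\tau' \in \solcircuit{u_1}$, the definition gives $\om[e](\tau') = \W(e)\,\om[e_1](\tau')/(\W(e_1)+\W(e_2))$, and since soundness yields $\sum_{e \in \outgoing{u}} \W(e) = \W(e_1)+\W(e_2)$, summation collapses to $\om[e_1](\tau') = \sum_{e \in \outgoing{u}} \om[e](\tau')$. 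When $u$ is a \andgate{}, I would instead sum $\om[e](\tau_1 \times \tau_2)$ over both $e \in \outgoing{u}$ and $\tau_2 \in \solcircuit{u_2}$; Lemma~\ref{lem:WToOmUpward} makes $\sum_{\tau_2 \in \solcircuit{u_2}} \om[e_2](\tau_2)/\W(e_2) = 1$, and soundness gives $\sum_{e \in \outgoing{u}} \W(e) = \W(e_1)$, so the sum again collapses to $\om[e_1](\tau_1)$.

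After rewriting each $\om[e](\cdot)$ on the right-hand side with the induction hypothesis, the remaining task is a set-theoretic identity for the index sets: $\solproj{e_1}{\tau'} = \biguplus_{e \in \outgoing{u}} \solproj{e}{\tau'}$ in the \orgate{} case, and $\solproj{e_1}{\tau_1} = \biguplus_{e \in \outgoing{u}} \biguplus_{\tau_2 \in \solcircuit{u_2}} \solproj{e}{\tau_1 \times \tau_2}$ in the \andgate{} case. Disjointness of these unions follows from Proposition~\ref{prop:outgoingDisjoint} (distinct outgoing edges induce disjoint relations) together with the distinctness of the projection constraints indexed by $\tau_2$. For the equalities I would argue at the level of proof-trees via Proposition~\ref{prop:prooftreeUnique}: any $\sigma \in \soledge{e_1}$ has $u$ in its proof-tree and hence exactly one edge of $\outgoing{u}$ in it, while conversely a tuple routing through some $e \in \outgoing{u}$ with the prescribed projection onto $\var{\circuit_u}$ must pass through $u_1$ (using disjointness of the \orgate{} with Proposition~\ref{prop:prooftreeOr}, or the fact that both children of a \andgate{} lie in the proof-tree together with Proposition~\ref{prop:prooftreeAnd}).

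The step I expect to be the main obstacle is exactly this reconciliation of the two descriptions of the index set, particularly checking in the \andgate{} case that letting $\tau_2$ range over all of $\solcircuit{u_2}$ precisely accounts for the $\var{\circuit_{u_2}}$-part of every tuple in $\soledge{e_1}$; this is where $\soledge{e_1} = \biguplus_{o \in \outgoing{u}} \soledge{o}$ from Proposition~\ref{prop:prooftreeAnd} and the presence of both children of a \andgate{} in the proof-tree do the real work. A secondary technical point is the bookkeeping of degenerate cases: I would observe that $\W(e_i) = 0$ forces $\om[e_i] \equiv 0$ regardless of the type of $u_i$, that soundness forces $\W(e_1) = \W(e_2)$ at a \andgate{}, and that in these situations both sides of the claimed identity vanish, so the formula remains consistent.
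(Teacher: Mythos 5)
Your proposal is correct and follows essentially the same route as the paper's proof: a top-down induction starting from the output edge, a case split on the type of the target gate, the same set-theoretic decompositions $\solproj{e_1}{\tau'} = \biguplus_{e \in \outgoing{u}} \solproj{e}{\tau'}$ (resp.\ with the extra union over $\tau_2 \in \solcircuit{u_2}$), justified by the same proof-tree propositions, and the same algebraic collapse via soundness and Lemma~\ref{lem:WToOmUpward}, including the zero-weight degenerate cases. The only differences are notational and the order in which the collapse identity and the index-set identity are presented.
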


\begin{proof}
  The proof is by top-down induction on $\circuit$. 

  \textbf{Base case}: We prove the result for $e = \rootedge = \edge{u}{v}$. Let
  $\tau' \in \solcircuit{u}$. Because $u$ is the output gate, we have
  $\var{\circuit_u} = \var{\circuit}$ and hence $\solproj{e}{\tau'} =
  \{\tau'\}$. Recall that $\om = \om[{o_r}]$ by definition. In other words, $\om[e](\tau') =
  \om(\tau') = \sum_{\tau \in \solproj{e}{\tau'}} \om(\tau)$.
 
  \textbf{Inductive case}: Now let $e=\edge{u}{v}$ be an internal edge of
  $\circuit$. Let $o_1,\dots,o_n$ be the outgoing edges of $v$, $u'$ be the only
  sibling of $u$ and let $e'=\edge{u'}{v}$. See Figure~\ref{fig:inductive_step2}
  for a schema of these notations. We fix $\tau' \in \solcircuit{u}$ and prove
  the desired equality.

    \begin{figure}[H]
    \centering
    \includegraphics[width=2cm]{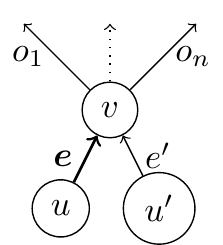}
    \caption{Notations for the inductive step.}
    \label{fig:inductive_step2}
    \end{figure}

    \textbf{Case 1}: $v$ is a \orgate{}. In this case, $\tau' \in
    \solcircuit{v}$. We claim that
    \[\solproj{e}{\tau'} = \biguplus_{o \in \outgoing{v}} \solproj{o}{\tau'}.\]
    For left-to-right inclusion, let $\tau \in \solproj{e}{\tau'}$. By
    definition, its proof tree $\prooftree{\circuit}{\tau}$ contains $e$. Since
    $\prooftree{\circuit}{\tau}$ is connected by
    Proposition~\ref{prop:prooftreeUnique}, $\prooftree{\circuit}{\tau}$ has to
    contain one edge of $\outgoing{v}$. The disjointness of
    the right-side union is a direct consequence of
    Proposition~\ref{prop:outgoingDisjoint}. For the right-to-left inclusion,
    fix $o \in \outgoing{v}$ and let $\tau \in \solproj{o}{\tau'}$. By
    definition, its proof tree $\prooftree{\circuit}{\tau}$ contains $o$, thus
    it also contains the vertex $v$. Now recall that $\projt{\tau}{\var{C_u}} =
    \tau' \in \solcircuit{u}$. Thus, by definition of proof trees, $u$ is also
    in $\prooftree{\circuit}{\tau}$. In other words, $\tau \in
    \solproj{e}{\tau'}$. Using this equality, we have:
    \begin{align*}
      \sum_{\tau \in \solproj{e}{\tau'}} \om(\tau) & = \sum_{o \in \outgoing{v}} \sum_{\tau \in \solproj{o}{\tau'}} \om(\tau)\\
      &= \sum_{o \in \outgoing{v}} \om[o](\tau'). \label{eq:test}
    \end{align*}
    Since, by induction, we have that for every $o \in \outgoing{v}$,
    $\om[o](\tau')=\sum_{\tau \in \solproj{o}{\tau'}} \om(\tau)$.

    Assume first that $\sum_{o \in \outgoing{v}} W(o) = 0$. In this case, by
    definition, for every $o$, $\om[o](\tau') = 0$. Since $W$ is sound however,
    we also have $W(e) = 0$, which implies by Lemma~\ref{lem:WToOmUpward}, that
    $\om[e](\tau')=0$ as well. In this case, $\sum_{\tau \in \solproj{e}{\tau'}}
    \om(\tau) = 0 = \om[e](\tau')$ which is the induction hypothesis.

    Now assume that $\sum_{o \in \outgoing{v}} W(o) \neq 0$. We can thus apply the
    definition of $\om[o](\tau') = {W(o) \over W(e)+W(e')} \om[e](\tau')$ in the
    last sum. It gives
    \begin{align*}
      \sum_{o \in \outgoing{v}} \om[o](\tau') &= \sum_{o \in \outgoing{v}}{W(o) \over W(e)+W(e')} \om[e](\tau') \\ 
                                              & = \om[e](\tau') {1\over W(e)+W(e')}\sum_{o \in \outgoing{v}}{W(o)} \\
                                              & = \om[e](\tau')
    \end{align*}
    where the last equality follows from the fact that $\W$ is sound and thus
    the ratio is $1$.


    \textbf{Case 2}: $v$ is a \andgate{}. Similarly as before, we have:

       \[\solproj{e}{\tau'} = \biguplus_{\tau'' \in \solcircuit{u'}}\biguplus_{o \in \outgoing{v}} \solproj{o}{\tau'
           \times \tau''}.\]
    
    For left-to-right inclusion, let $\tau \in \solproj{e}{\tau'}$. By
    definition, its proof tree $\prooftree{\circuit}{\tau}$ contains $e$. Since
    $\prooftree{\circuit}{\tau}$ is connected by
    Proposition~\ref{prop:prooftreeUnique}, $\prooftree{\circuit}{\tau}$ has to
    contain one edge $o$ of $\outgoing{v}$. Thus, $\tau \in \solproj{o}{\tau'
      \times \projt{\tau}{\var{C_{u'}}}}$.

    The disjointness of the right-side union is a direct consequence of
    Proposition~\ref{prop:outgoingDisjoint}. For the right-to-left inclusion,
    fix $o \in \outgoing{v}$ and $\tau'' \in \solcircuit{u'}$. Let $\tau \in
    \solproj{o}{\tau' \times \tau''}$. By definition, its proof tree
    $\prooftree{\circuit}{\tau}$ contains $o$, thus it also contains the vertex
    $v$ and by definition of proof trees, $u$ is also in
    $\prooftree{\circuit}{\tau}$. And since $\projt{\tau}{\var{C_u}} = \tau'$,
    $\tau \in \solproj{e}{\tau'}$. Using this equality, we have:
    \begin{align*}
      \sum_{\tau \in \solproj{e}{\tau'}} \om(\tau) & = \sum_{o \in \outgoing{v}} \sum_{\tau'' \in \solcircuit{u'}} \sum_{\tau \in \solproj{o}{\tau' \times \tau''}} \om(\tau)\\
      &= \sum_{o \in \outgoing{v}} \sum_{\tau'' \in \solcircuit{u'}} \om[o](\tau' \times \tau''). 
    \end{align*}
    Since, by induction, we have that for every $o \in \outgoing{v}$,
    $\om[o](\tau' \times \tau'')=\sum_{\tau \in \solproj{o}{\tau' \times \tau''}} \om(\tau)$.

    Assume first that $\sum_{o \in \outgoing{v}} W(o) = 0$. In this case, by
    definition, for every $o$ and $\tau''$, $\om[o](\tau' \times \tau'') = 0$.
    Since $W$ is sound however, we also have $W(e) = 0$, which implies by
    Lemma~\ref{lem:WToOmUpward}, that $\om[e](\tau')=0$ as well. In this case,
    $\sum_{\tau \in \solproj{e}{\tau'}} \om(\tau) = 0 = \om[e](\tau')$ wich is
    the induction hypothesis.

    Now assume that $\sum_{o \in \outgoing{v}} W(o) \neq 0$. We can thus apply
    the definition of $\om[o](\tau' \times \tau'') = W(o) {\om[e](\tau') \over
      W(e)}{\om[e'](\tau'') \over W(e')}$ in the last sum. It gives
    \begin{align*}
      & \sum_{o \in \outgoing{v}} \sum_{\tau'' \in \solcircuit{u'}} \om[o](\tau' \times \tau'') \\
      & = \sum_{o \in \outgoing{v}} \sum_{\tau'' \in \solcircuit{u'}} W(o) {\om[e](\tau') \over
      W(e)}{\om[e'](\tau'') \over W(e')} \\
      & = {\om[e](\tau')\over W(e)} \big(\sum_{o \in \outgoing{v}} W(o)\big) {\sum_{\tau'' \in \solcircuit{u'}} \om[e'](\tau'') \over W(e')}
    \end{align*}

    Since $W$ is sound, $\sum_{o \in \outgoing{v}} W(o)=W(e)$. Moreover, by
    Lemma~\ref{lem:WToOmUpward}, $\sum_{\tau'' \in \solcircuit{u'}}
    \om[e'](\tau'')=W(e')$. Thus, the last sum equals to $\om[e](\tau')$ which
    concludes the proof.
    
\end{proof}

Theorem~\ref{th:weightingCorrespondence}~\ref{it:Wtoom} is now an easy
consequence of Lemma~\ref{lem:WToOmUpward} and Lemma~\ref{lem:WToOmDownward}:
\begin{align*}
  W(e) & = \sum_{\tau' \in \solcircuit{u}} \om[e](\tau') \proofnote{by Lemma~\ref{lem:WToOmUpward}}\\
       & = \sum_{\tau' \in \solcircuit{u}} \sum_{\tau \in \solproj{e}{\tau'}} \om(\tau)\proofnote{by Lemma~\ref{lem:WToOmDownward}}\\
       & = \sum_{\tau \in \soledge{e}} \om(\tau) \proofnote{since $\soledge{e} = \biguplus_{\tau' \in
         \solcircuit{u}} \solproj{e}{\tau'}$}.
\end{align*}

Indeed, $\soledge{e} = \biguplus_{\tau' \in \solcircuit{u}} \solproj{e}{\tau'}$.
For the left-to-right inclusion, if $\tau \in \soledge{e}$ then by definition,
$\tau' = \projt{\tau}{\var{C_u}} \in \solcircuit{u}$ and thus $\tau \in
\solproj{e}{\tau'}$. The other inclusion follows by definition since
$\solproj{e}{\tau'} \subseteq \soledge{e}$ for every $\tau' \in \solcircuit{u}$.

\section{Consequences and extension of the result}
\label{sec:proj}

\subsection{Effective reconstruction of solutions}

As it is illustrated in the motivating example of the introduction, one may not
only be interested in the optimal value of a {\sumlp} but also on finding an
optimal solution. As the size of a factorized relation may be too big, it may
not be tractable to output the entire optimal solution. However, the
construction used in the proof of Theorem~\ref{th:weightingCorrespondence} to go
from an edge-weighting to a tuple-weighting is effective in the sense that given
an edge-weighting $W$ of a {\ddcircuit} $C$ and $\tau \in \solcircuit{}$, one
can compute $\om(\tau)$ in time polynomial in $|C|$ (where $\om$ is the
tuple-weighting inducing $W$ given by Theorem~\ref{th:weightingCorrespondence}).
It is indeed enough to compute the partial tuple weightings $\omega_e$ presented
in Section~\ref{sec:edges-tuples} in a bottom-up induction on $\circuit$.

\subsection{Tractability for conjunctive queries}

By connecting Theorem~\ref{th:rewritecaslp} and Theorem~\ref{thm:cq-to-fr}, we
directly get the tractability of solving bounded fractional hypertree width
quantifier free conjunctive queries:

\begin{corollary}
  \label{cor:rewritecaslp_cq} Given a {\sumlp} $L$, a quantifier-free
  conjunctive query $Q$, a hypertree decomposition of $Q$ of fractional
  hypertreewidth $k$ and a database $\db$, one can compute the optimal value of
  $L(Q(\db))$ in polynomial time.
\end{corollary}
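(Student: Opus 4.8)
The plan is to chain the two main theorems of the paper together with a black-box polynomial-time linear program solver. First I would invoke Theorem~\ref{thm:cq-to-fr} on $Q$, the given hypertree decomposition of fractional hypertreewidth $k$, and $\db$ to build a \ddcircuit{} $C$ with $\var{C}$ equal to the variables of $Q$ and with $\solcircuit{} = Q(\db)$. By that theorem $C$ has size at most $|Q|\cdot|\db|^k$ and is produced in time $O(\log(|\db|)\cdot|\db|^k)$, which is polynomial in $|Q|$ and $|\db|$ once $k$ is fixed.

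Next I would feed $L$ and $C$ to Theorem~\ref{th:rewritecaslp}. Here one must first ensure that the attribute set and domain of $L$ agree with those of $C$: I would take the attributes $X$ of $L$ to be the variables of $Q$ and the domain $D$ to be the (active) domain of $\db$, so that $L$ and $C$ indeed live on the same $X$ and $D$. Theorem~\ref{th:rewritecaslp} then produces, in polynomial time, a linear program $L'$ with $|C|$ variables and $m+O(|C|)$ constraints whose optimal value equals that of $L(\solcircuit{}) = L(Q(\db))$.

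Finally, since $L'$ has size polynomial in $|C|$ --- and hence, for fixed $k$, polynomial in $|Q|$ and $|\db|$ --- I would solve $L'$ with one of the polynomial-time linear programming algorithms discussed in Section~\ref{sec:lp}, reading off its optimal value. By the guarantee of Theorem~\ref{th:rewritecaslp} this value is exactly the optimal value of $L(Q(\db))$, which proves the corollary.

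The conceptual work is already done by the two cited theorems, so the only genuine difficulty is bookkeeping rather than mathematics. I would be careful about two points: that ``polynomial time'' is understood with $k$ as a fixed parameter, since the exponent of $|\db|$ in the circuit size grows with $k$; and that the coefficient-encoding caveat raised after the definition of the size of a linear program does not break the time bound. The latter causes no trouble, because the rewriting of Theorem~\ref{th:rewritecaslp} reuses the coefficients of $L$ verbatim, so the bit-size of $L'$ stays polynomial in that of the input and the solver runs in time polynomial in this bit-size.
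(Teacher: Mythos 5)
Your proposal is correct and matches the paper's argument exactly: the paper proves this corollary simply by chaining Theorem~\ref{thm:cq-to-fr} (to build the circuit of size $|Q|\cdot|\db|^k$) with Theorem~\ref{th:rewritecaslp} (to rewrite the linear program) and then invoking a polynomial-time LP solver. Your extra care about matching attributes and domains and about coefficient encoding is sound bookkeeping that the paper leaves implicit.
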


The exact complexity of Corollary~\ref{cor:rewritecaslp_cq} depends on the
runtime of the linear program solver that is used that also highly depends on
the structure of the linear program. We do not know whether we could directly
solve the linear program on the circuit without calling the solver and have
better complexity bounds.

The complexity also depends on the fact that we are given a good hypertree
decomposition in the input. Computing the best decomposition for fractional
hypertree width is known to be an $\NP$-hard problem~\cite{fischl2018general}
but it can be approximated in polynomial time to a cubic factor~\cite{marx10}
which is enough if one is interested only in theoretical tractability. In
practice however, our algorithm would perform the best when the circuit is
small, be it computed from hypertree decompositions or other techniques.

Our result does not directly apply to quantified conjunctive queries. Indeed,
existentially or universally projecting variables in a {\ddcircuit} may lead to
an exponential blow-up of its size or the disjointness of unions may be lost.
However, for existential quantification, we do not need to actually project the
variables in the circuit to compute the optimal value. Given a relation $R$ on
attribute $\varx$ and $Z \subseteq \varx$, we denote by $\exists Z. R =
\{\tau|_{\varx \setminus Z} \mid \tau \in R\}$. We have the following:

\begin{theorem}
  \label{th:caslpproj}
  Let $L$ be a {\sumlp} on attributes $\varx \setminus Z$ and domain $\dom$. Let
  $R$ be a relation on attributes $\varx$ and domain $\dom$. $L(R)$ and
  $L(\exists Z.R)$ have the same optimal value.
\end{theorem}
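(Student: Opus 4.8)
The plan is to show that the two ground programs have identical feasible regions once one records only the aggregated quantities $\sxd{x}{d}$: since both the objective $\phi$ and the constraints $\calC$ of a {\sumlp} depend on an assignment solely through the values of the CAS variables $\sxd{x}{d}$ with $x \in \varx \setminus Z$, it is enough to prove that a vector $(\sxd{x}{d})_{x,d}$ is realized by some weighting $\om : R \to \R_+$ if and only if it is realized by some weighting on $\exists Z.R$. The whole argument is driven by the surjection $\pi : R \to \exists Z.R$ sending $\tau$ to $\projt{\tau}{\varx \setminus Z}$, together with the crucial fact that every attribute $x$ occurring in $L$ lies outside $Z$, so $\pi(\tau)(x) = \tau(x)$.

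First I would handle the passage from $R$ to $\exists Z.R$. Given a solution $\om : R \to \R_+$ of $L(R)$, I define the pushforward $\om' : \exists Z.R \to \R_+$ by $\om'(\sigma) := \sum_{\tau \in R,\ \pi(\tau) = \sigma} \om(\tau)$, which is again non-negative. The key computation is that for every $x \in \varx \setminus Z$ and $d \in \dom$,
\[
  \sum_{\substack{\sigma \in \exists Z.R \\ \sigma(x) = d}} \om'(\sigma)
    = \sum_{\substack{\sigma \in \exists Z.R \\ \sigma(x) = d}} \ \sum_{\substack{\tau \in R \\ \pi(\tau) = \sigma}} \om(\tau)
    = \sum_{\substack{\tau \in R \\ \tau(x) = d}} \om(\tau),
\]
the last equality holding because the fibers of $\pi$ partition $R$ and, as $x \notin Z$, we have $\sigma(x) = d \iff \tau(x) = d$ for any $\tau$ with $\pi(\tau) = \sigma$. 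Hence $\om$ and $\om'$ induce exactly the same values for every CAS variable of $L$; consequently $\om'$ satisfies $\calC$ precisely when $\om$ does, and $\phi$ evaluates identically on the two. This gives that the optimal value of $L(\exists Z.R)$ is at least that of $L(R)$.

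For the converse I would lift a solution $\om' : \exists Z.R \to \R_+$ back to $R$: choose for each $\sigma$ a single representative $\tau_\sigma \in \pi^{-1}(\sigma)$ and set $\om(\tau_\sigma) := \om'(\sigma)$, with $\om(\tau) := 0$ on all other tuples (any nonnegative splitting of $\om'(\sigma)$ across its fiber would do). Since only the representatives carry weight and $\tau_\sigma(x) = \sigma(x)$ for $x \notin Z$, the same aggregation identity yields $\sum_{\tau(x) = d} \om(\tau) = \sum_{\sigma(x) = d} \om'(\sigma)$ for all $x \in \varx \setminus Z$; thus every CAS variable again agrees and $\om$ is a solution of $L(R)$ with the same objective value. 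Combining the two inclusions shows the feasible regions coincide in CAS-variable space, so $L(R)$ and $L(\exists Z.R)$ have equal optimal values; the degenerate case $R = \emptyset$, where $\exists Z.R = \emptyset$ and all $\sxd{x}{d}$ vanish in both programs, is immediate.

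I do not expect a genuine obstacle here: the entire proof hinges on the single observation that a CAS variable $\sxd{x}{d}$ aggregates over an attribute $x \notin Z$ and is therefore insensitive to the projected coordinates, so its value is preserved under both $\pi$ and any lift. The one step that deserves care is the interchange of the two summations together with the equivalence $\sigma(x) = d \iff \tau(x) = d$ on a fiber of $\pi$; the remainder is routine bookkeeping.
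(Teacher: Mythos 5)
Your proof is correct and follows essentially the same route as the paper's: push a weighting forward along the projection fiber-by-fiber, and lift one back by redistributing each weight over its fiber, observing that the CAS variables only aggregate attributes outside $Z$ and are therefore unchanged. The only (immaterial) difference is in the lift: you concentrate each $\om'(\sigma)$ on a single representative of the fiber, while the paper splits it uniformly as $\om(\tau) := \om'(\tau')/\#\Ext(\tau')$; as you note, any nonnegative splitting works.
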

\begin{proof}
  We show how one can transform a solution of $L(R)$ into a solution of
  $L(\exists Z.R)$ with the same value and vice-versa.

  We introduce the notation for $\tau' \in \exists Z.R$:
  \[\Ext(\tau') = \{\tau'' : Z \rightarrow \dom \mid \tau'\times\tau'' \in R\}.\]
  
  Let $\om : R \rightarrow \R_+$ be a solution of $L(R)$. Given $\tau' \in
  \exists Z.R$, we define $\om'(\tau') := \sum_{\tau'' \in \Ext(\tau')}
  \om(\tau' \cdot \tau'')$. It is easy to see that $\om'$ is a solution of $L(\exists Z.R)$
  since $\om'(S_{x,d}) = \om(S_{x,d})$ for every $x \in \varx \setminus Z$ and
  $d \in \dom$.

  Similarly, if $\om' : \exists Z. R \rightarrow \R_+$ is a solution of
  $L(\exists Z. R)$, we define for $\tau \in R$, $\om(\tau) := {\om'(\tau')
    \over \#\Ext(\tau')}$ where $\tau' = \tau|_{\varx \setminus Z}$. Again, it
  is easy to see that $\om(S_{x,d}) = \om'(S_{x,d})$ which concludes the proof.
\end{proof}

Theorem~\ref{th:caslpproj} implies that one can still solve {\sumlp}s on
relations given as existential projection of a {\ddcircuit}. In other words, if
a conjunctive query $Q$ is of the form $\exists Z. Q'$ and $Q'$ has bounded
fractional hypertree width, then we can still solve {\sumlp}s on $Q(\db)$ for a
given database $\db$ in polynomial time by transforming $Q'$ into a {\ddcircuit}
and solving the {\sumlp} on it directly.

\section{Conclusion}
\label{sec:conclusion}


In this paper, we have initiated the study of a new kind of natural aggregation
tasks: solving optimization problems whose variables are the answers of a
database query. We isolated an interesting class of linear programs that is
intractable for conjunctive queries in general but for which we provide an
algorithm that runs in polynomial time when the fractional hypertreewidth of the
conjunctive query is bounded. Our technique relies on factorized representation
of the answer set of the conjunctive query.

There are many possible future directions based on the techniques we present in
this paper. A first step would be to generalize our methods to larger classes of
linear programs, allowing more complex meta-variables than the {\sxdname}s that
we use in this paper. Another limitation of our work that we would like to
improve is that it can currently only handle linear programs having positive
values for technical reasons in the way our proof is constructed. Building on
this idea, a very exciting yet ambitious program is to generalize and integrate
our methods in a larger framework resembling GNU Math Prog or AMPL augmented
with database queries on which we could act to generate smaller linear programs
than the naive grounding.

Finally, we would like to investigate applications of our result and methods to
existing techniques in data mining and databases. A first interesting step would
be to use our algorithm to compute the $s$-measure
of~\cite{wang_efficiently_2013} in practice for some real life queries. We would
like to try to run the algorithm on benchmarks from data mining and see how they
compare with the bruteforce approach of generating the complete ground linear
program. We already tried a basic implementation of our algorithm on syntactical
data to see if the linear program solver scales but did not yet consider any
real benchmark.

Moreover, it would be interesting to investigate more deeply connections with
other works using Binary or Mixed-Integer Programming to solve database
problems. In \cite{kolaitis_efficient_2013} the authors use binary programming
to answer queries on inconsistent databases. They do so by generating binary
programs with two types of constraints. The first type of constraints encodes
the possible repairs of the database by limiting the total weight of the sum of
the tuples sharing a value for a primary key. These are constraints over
{\sxdname}s in our framework and can thus be handled seamlessly. The second type
of constraints deals with witnesses of the potential answers. We might be able
to handle theses constraints by adding some information about the witnesses to
the circuit.

In \cite{tiresias} the authors use Mixed-Integer Programming to answer how-to
queries expressed in the "Tiresias Query Language" (TiQL) which is based on
datalog. The constraints of their Mixed-Integer Programs appear to be quite
different at first glance from what our framework is able to handle. However the
constraints are generated using the provenance of the tuples, a notion which is
closely related to the circuits we exploit to solve \sumlp s more efficiently so
it might be possible to exploit the structure of the circuits in a different way
to answer TiQL queries more efficiently.

\paragraph{Acknowledgment}
This work was partially supported by the French Agence Nationale de la
Recherche, AGGREG project reference ANR-14-CE25-0017-01, Headwork project
reference ANR-16-CE23-0015 and by a grant of the Conseil Régional
Hauts-de-France. The project DATA, Ministère de l'Enseignement Supérieur et de
la Recherche, Région Nord-Pas de Calais and European Regional Development Fund
(FEDER) are acknowledged for supporting and funding this work. We also thank Rui
Castro, Sylvain Salvati, Sophie Tison and Yuyi Wang for fruitful discussions and
anonymous reviewers of a previous version of this paper for their helpful
comments.

\bibliography{biblio}

\clearpage

\appendix

\end{document}